\newtheorem{theorem}{Theorem}[section]
\newtheorem{lemma}{Lemma}
\theoremstyle{definition}
\theoremstyle{remark}
\newtheorem{remark}{Remark}
\numberwithin{equation}{section}
\newcommand{\cN}{{\mathcal N}}
\newcommand{\cM}{{\mathcal M}}
\newcommand{\cP}{{\mathcal P}}
\newcommand{\cE}{{\mathcal E}}
\newcommand{\wh}{\widehat}
\newcommand{\pa}{\partial}
\newcommand{\nn}{\nonumber}
\begin{document}

\title{Mathematical conception of  the gas theory}
\author{V.P.Maslov}


\date{}
\maketitle

\begin{abstract}
In this paper, using of the rigorous statement and rigorous proof
the Maxwell distribution as an example, we establish estimates of
the distribution depending on the parameter~$N$, the number of
particles. Further, we consider the problem of the occurrence of
dimers in a classical gas as an analog of Bose condensation and
establish estimates of the lower level of the analog of Bose
condensation. Using of the dequantization principles we find the
relationship of this level to ``capture'' theory in the
scattering problem corresponding to an interaction of the form of
the Lennard-Jones potential. This also solves the problem of the
Gibbs paradox.

We derive the equation of state for a nonideal gas as a result of
pair interactions of particles in Lennard-Jones models and, for
classical gases, discuss the $\lambda$-transition to the
condensed state (the state in which $V_{\text{sp}}$ does not vary
with increasing pressure; for heat capacity, this is the
$\lambda$-point).
\end{abstract}

\section{The Maxwell distribution}

It is an old misconception that statistical physics and thermodynamics
can be derived from the laws of mechanics and dynamical systems.
It still persists from the days of the controversy between
L.~Boltzmann with H.~Poincar\'e, E.~Zermelo, and other mathematicians.
However, in order to solve the clusterization problem,
computer simulation based, as a rule, on the laws of mechanics is
usually used.  For example, from a mechanical point of view, to
obtain a dimer, i.e., a coupled pair of particles, a
``collision'' (interaction) of three particles must occur (so
that part of the energy is imparted to one of them, and only this
results in a mechanical capture).  Although modern computers are
powerful, one can \textit{a priori} expect the wrong answer.
The main equation of statistical physics, the Boltzmann equation,
cannot be obtained, in principle, solely from mechanical laws.
In principle, it can be derived (but not in the near future)
from quantum field theory, in which there are no \textit{a
priori} prescribed pair interactions.  Here we rely on the
natural axiom of the existence of a mean field formed by $N$.
particles, probabilistic number theory, and the theory of white
noise at a given temperature~\cite{31:v851}.

Academcian N.N.Bogolyubov used to say: ''I looked fora small
parameter during my whole life.'' Bogolyubov was a mathematician
in essence and looked for small parameters in physics. Physicists
operate very well  with digital values and intuitively understand
or  reckon up mentally whether or not a given asymptotic is
applicable.

Let us cite the corresponding text of Landau and Lifshits in
their manual on statistical physics. Assuming that the
Russel-Sounders case of connection in the atom holds, the authors
represent the partition function in the following form (we
simplify their representation):
\begin{equation}\label{RJ4}
Z= \sum e^{-\varepsilon_j/kT},
\end{equation}
where the symbols $\varepsilon_j$ stand for the components of the
fine structure of the normal term. Let us quote: ''As is known,
the existence of nuclear spin leads to the so-called hyperfine
splitting of atomic levels. However, the intervals of this
structure are so tiny that they can be regarded as small
intervals as compared with $T$ for all the temperatures for which
the gas exists as gas.'' This is continued in a footnote: ''The
temperatures corresponding to intervals of the hyperfine
structure of diverse atoms are beyond the limits from $0,1^0$ to
$1,5^0$." (\cite{Landau}, Russian p. 163). Thus, the authors say
that the value $T = 0,1K$ is large, which just means that one must
introduce the small parameter indicated above.

This example shows that physicists do not need this parameter.
Using at appropriate places the digits of hyperfine structure,
the Russel-Sounders connection, and the digits arising in the
specific problem under consideration, they know whether or not a
given formula is applicable better than the mathematician who
obtained the related estimates. However, to be correct, this is
true for the Great Physicists only, and the above instruments can
create far-reaching errors of ordinary good physicists (see,
e.g.,\cite{JETF}).

Let us consider a classical gas.

The Maxwell distribution is of the form
\begin{equation}
\label{eq1:v851} \cN_{dv} =\biggl(\frac{m}{2\pi kT}\biggr)^{3/2}
e^{-(m(v^2_x+v^2_y+v^2_z))/2kT}\,dv_x\,dv_y\,dv_z,
\end{equation}
where $v$ is the velocity, $T$ is the temperature, $k$ is the
Boltzmann constant, and $\cN_{dv}$ is the relative number of
particles contained in the interval $dv=dv_x\,dv_y\,dv_z$.

We must bear in mind that its interpretation as a distribution
density is false in the general case, but valid only in the
cumulative variant. This implies that the integral of the
density~\eqref{eq1:v851} over any narrow finite velocity interval
bounded below determines, indeed, the relative
number~$\cN_{v_1v_2}$ of particles in this velocity range.

We obtain the usual Maxwell distribution for a sufficiently
narrow velocity interval:
\begin{equation}
\label{eq2:v851} \cN_{v_1\sqrt{\kappa},v_2\sqrt{\kappa}}
=\frac{\int_{|v_1|\sqrt{\kappa}}^{|v_2|\sqrt{\kappa}} e^{-
mv^2/2kT}v^2\,d|v|} {\int_{-\infty}^\infty
e^{-mv^2/2kT}v^2\,d|v|}\,,
\end{equation}
where $\cN_{v_1\sqrt\kappa,v_2\sqrt\kappa}$ is the relative
number of particles with velocities in the interval
$|v_1|\sqrt\kappa$, $|v_2|\sqrt\kappa$, and $\kappa$ is a small
parameter.

By a change of variables, we can get rid of the small
parameter~$\kappa$ by transferring it to the exponential, i.e.,
\begin{equation}
\label{eq3:v851} \cN_{v_1v_2} =\frac{\int_{|v_1|}^{|v_2|}
e^{-(\kappa mv^2)/2kT}v^2\,d|v|} {\int_{-\infty}^\infty
e^{-(\kappa mv^2)/2kT}v^2\,d|v|}\,, \qquad \text{for}\quad
\kappa\to0.
\end{equation}

This formula, as an asymptotic formula, will be obtained below,
as well as its estimate, i.e., its domain of applicability.

Consider the most often used Lennard-Jones interaction potential
\begin{equation}
\label{eq4:v851} \varphi(r) =4U_0\biggl[\biggl(\frac\sigma
r\biggr)^{12} -\biggl(\frac\sigma r\biggr)^{6}\biggr],
\end{equation}
where $\sigma$ is the distance at which the potential function
changes sign and $U_0$ is the minimum value of the potential (at
the point $r=2^{1/6}\sigma$) or the depth of the potential well.

From dimensional considerations~\eqref{eq1:v851} for the
quantities appearing in the definition of the particle (neutral
molecule), we can write
\begin{equation}
\label{eq5:v851} \biggl(\frac{m}{U_0}\biggr)^{3/2}
\int^\infty_{-\infty} e^{-(\kappa mv^2)/2kT}\,dv_x\,dv_y\,dv_z=N.
\end{equation}
for this molecule. Hence
\begin{equation}
\label{eq6:v851} \frac43\cdot\frac{\pi}{\kappa^{3/2}}
\biggl(\frac{2kT}{U_0}\biggr)^{3/2}=N,
\end{equation}
i.e.,
$$
\kappa=\biggl(\frac{2kT}{U_0}\biggr)N^{-2/3}
\cdot\biggl(\frac{4\pi}3\biggr)^{2/3}.
$$
Therefore, the velocity interval on which we can determine the
relative number of particles is
\begin{equation}
\label{eq7:v851} \biggl\{\biggl(\frac{2kT}{U_0}\biggr)^{1/2}
|v_1|N^{-1/3}, \biggl(\frac{2kT}{U_0}\biggr)^{1/2}|v_2|
 N^{-1/3}\biggr\},
\end{equation}
where $v_1$ and $v_2$ arbitrary velocities, $|v_2|>|v_1|$,
independent of the number~$N$.

For a rigorous justification of the Maxwell distribution, we use
the seemingly insignificant fact that the number of particles~$N$
is an integer and apply number theory, which does not seem
relevant at all.

The Maxwell distribution is equally important in complexity
theory as the Poisson, Gauss, and other classical distributions.

Let us define the energy
\begin{equation}
\label{eq8:v851}
 E=\frac{4\pi/3\cdot(kT/2)^{5/2}}{U_0^{3/2}}\,.
\end{equation}
The distribution in the energy interval between $mv_1^2/2$ and
$mv_2^2/2$ must have the following form:
\begin{equation}
\frac{E_{v_1v_2}}{E} =\frac{\int_{mv_1^2/2}^{mv_2^2/2}\xi
e^{-\kappa(\xi/kT)} \,d\xi^{3/2}} {\int_{0}^{\infty}\xi
e^{-\kappa(\xi/kT)}\,d\xi^{3/2}}
=\frac{\int_{\kappa(mv_1^2/2)}^{\kappa(mv_2^2/2)} \xi
e^{-(\xi/kT)}\,d\xi^{3/2}} {\int_{0}^{\infty}\xi
e^{-(\xi/kT)}\,d\xi^{3/2}}\,. \label{eq9:v851}
\end{equation}
Without loss of generality, let
$$
\frac{mv_1^2}2=lU_0, \qquad \frac{mv_2^2}2=(l+1)U_0,
$$
where $l$ is an integer. Hence, up to $O(\kappa^2)$, we can write
\begin{equation}
\label{eq10:v851} \frac{E_{v_1v_2}}{E} =\frac{U_0\kappa
l^{1/2}l}{E} +O(\kappa^2)l^{1/2}\mspace{1mu}\frac{U_0}{E}\,.
\end{equation}
If we split the total energy~$E$ into intervals of the form
$\{l{U_0},(l+1)U_0\}$, $l=0,1,\dots,l_E$, so that the sum of
these intervals is less than~$E$ by a quantity $O(\kappa)$, then,
in view of the Euler--Maclaurin formula, we have the order
$$
l_E\cong\kappa^{-7/5}.
$$
Further, if we replace~$l^{1/2}$ by its integer part~$[l^{1/2}]$,
then we decrease the sum of the intervals by at most a quantity
$O(\kappa^{-7/5})$.

Hence the union of the partitions
$\kappa\sum_0^{l_E}U_0[l^{1/2}]l$ satisfies the inequalities
\begin{equation}
\label{eq11:v851}
 E-O(\kappa^{-7/5})
\le\kappa U_0\sum_0^{l_E}[l^{1/2}]l\le E.
\end{equation}
Thus, we obtain energy boxes and wish to find the most probable
number of particles with energies in each box.

Now let us split the number of particles $N=\sum N_{jk}$, where
$k=1,2,\dots,[j^{1/2}]$, $j=1,2,\dots,l_E$.

Therefore, given condition~\eqref{eq11:v851}, we obtain the
following constraint on our partition:
\begin{equation}
\label{eq12:v851}
 E-O(\kappa^{-7/5})
\le\kappa U_0\sum_{j=1}^{l_E} \sum_{k=1}^{[j^{1/2}]}N_{jk}\le E.
\end{equation}

The condition $N=\sum N_{jk}$ implies that the size of the
ordered sample with replacement~\cite{1:v851} is equal to~$N$,
while condition~\eqref{eq12:v851} means that the energy
corresponding to this sample is contained in the interval
$\{E-O(\kappa^{-7/5}),E\}$.

These were heuristic considerations. Now we make the following
assumptions.

In the volume~$V$, consider the system of $N$ particles
possessing the energy~$E$. Moreover, $N\to\infty$,
$E/U_0\to\infty$.

The interval $(O,E)$ is divided into small (compared with~$E$)
subintervals $E_i<E_{i+1}$, $i=1,\dots,l_0$, and the
corresponding intervals of the moduli of velocities
$|v_i|<v_{i+1}$, as well as intervals of the phase volume, and
the energy boxes $\Delta\Omega_i$ that are contained between
these velocities
\begin{equation}
\label{eq13:v851} \Delta\Omega_i=\int_{|v_i|}^{|v_{i+1}|}
\frac{mv^2}{2}\,dv_1dv_2dv_3
=\text{const}(E_{i+1}^{5/2}-E_{i}^{5/2})V
\approx\text{const}\Delta E_iE_i^{3/2}V.
\end{equation}
In these energy boxes, we place different particles using all
possible ways~$N$. In other words, we take an ordered sample with
replacement from~$N$ ``balls'' to these energy boxes (phase
volumes) by the method indicated in \eqref{eq12:v851}:
\begin{equation}
\label{eq14:v851} 0\le\kappa U_0\sum_{j=1}^{l_E}
\sum_{k=1}^{[j^{1/2}]}N_{jk}\le E.
\end{equation}

By $\cN(|v_1|,|v_2|)$ we denote the relative number of particles
in the velocity interval $(v_1,v_2)$.

Under the conditions given above, the following theorem is valid.

\begin{theorem}
\label{t1:v851} The probability that the estimate
\begin{align}
\nonumber &\cN(|v_1|,|v_1|+O(N^{-1/2+\delta})) -\frac{
\int_{|v_1|}^{|v_1|+O(N^{-1/2+\delta})}
e^{-mv^2/2kT}\,dv_1\,dv_2\,dv_3} {\int_{-\infty}^{-\infty}
e^{-mv^2/2kT}\,dv_1\,dv_2\,dv_3}
\\&\qquad
=O\biggl(\frac{\sqrt{\ln N}}{\sqrt{N}} |{\ln\ln
N}|^\varepsilon\biggr), \label{eq15:v851}
\end{align}
where $\delta>0$, and $\varepsilon$ is any arbitrarily small
number, and $|v_1|\ge0$ are arbitrary velocities), does not hold
is exponentially small (is less than~$1/N^k$, where $k$ is any
integer).
\end{theorem}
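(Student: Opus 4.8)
\msn\emph{Proof plan.} The strategy is to cast the placement of the $N$ particles as a constrained occupancy scheme, to locate the most probable occupation numbers by a maximum--entropy (mean--field) computation that reproduces the Maxwell weight, and then to establish the stated concentration by a Chernoff estimate together with a local limit theorem that handles the energy constraint and a union bound over the polynomially many energy boxes.

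First I would write the model exactly. An ordered sample with replacement of $N$ balls into the boxes $(j,k)$, $k=1,\dots,[j^{1/2}]$, $j=1,\dots,l_E$, with occupation numbers $\{N_{jk}\}$ and $\sum_{jk}N_{jk}=N$, has multiplicity $N!\prod_{jk}w_{jk}^{N_{jk}}/N_{jk}!$, where $w_{jk}$ is the phase--volume weight of box $(j,k)$ read off from $\Delta\Omega_i$ in \eqref{eq13:v851}; the admissible configurations are those obeying the energy restriction \eqref{eq14:v851}. Applying Stirling's formula, the logarithm of the multiplicity is, to leading order, $-\sum_{jk}N_{jk}\ln\!\bigl(N_{jk}/(eNw_{jk})\bigr)$, and maximizing this under the two linear constraints $\sum_{jk}N_{jk}=N$ and $\sum_{jk}\varepsilon_{jk}N_{jk}=E$ (with $\varepsilon_{jk}=\kappa U_0\cdot(\text{box energy})$) gives, via Lagrange multipliers $\mu,\beta$, the most probable numbers $\bar N_{jk}=Nw_{jk}e^{-\mu-\beta\varepsilon_{jk}}$. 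Passing to the continuum in the velocity variable turns this into the Maxwell weight $e^{-mv^2/2kT}$ once $\beta$ is identified with $1/kT$, the multiplier $\beta$ being pinned down by the energy constraint exactly as $\kappa$ was pinned down in \eqref{eq6:v851}; summing $\bar N_{jk}$ over the boxes falling in the window then yields precisely the ratio of Maxwell integrals subtracted in \eqref{eq15:v851}, so the assertion reduces to showing that $\cN$ does not deviate from this mean by more than the stated amount.

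For the concentration I would Poissonize: replace the energy--constrained multinomial ensemble by the product ensemble in which the $N_{jk}$ are independent Poisson with means $\bar N_{jk}$ and the constraint is dropped. There each $N_{jk}$ has sub--Gaussian tails at scale $\sqrt{\bar N_{jk}}\le\sqrt N$, so by a Bernstein/Chernoff bound the count of particles in the window $(|v_1|,|v_1|+O(N^{-1/2+\delta}))$ --- which on average is a growing power of $N$, hence well inside the Gaussian regime --- deviates from its mean by more than $t\sqrt N$ with probability at most $e^{-ct^2}$ for an absolute $c>0$. Taking $t$ of order $\sqrt{\ln N}\,|\ln\ln N|^{\varepsilon}$ makes this smaller than $N^{-m}$ for every $m$; since there are only $l_E^{3/2}=\mathrm{poly}(N)$ boxes, the union bound needed to control $\cN$ (numerator and normalization) simultaneously over all of them costs only a further polynomial factor, still leaving a bound below $1/N^{k}$. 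Dividing by $N$ converts the $t\sqrt N$ absolute deviation into the relative deviation $O\bigl(\sqrt{\ln N}\,|\ln\ln N|^{\varepsilon}/\sqrt N\bigr)$ of \eqref{eq15:v851}; the factor $|\ln\ln N|^{\varepsilon}$ is exactly the slack that lets one fixed choice of $t$ outrun the fixed polynomial degree of the box count.

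The main obstacle is the equivalence of ensembles, i.e.\ transferring this bound from the Poissonized product ensemble back to the ensemble conditioned on the energy shell \eqref{eq14:v851}. This needs a local limit theorem for the total energy $\kappa U_0\sum_{jk}N_{jk}$, a sum of independent lattice variables (integer multiples of $\kappa U_0$) with uniformly controlled variances, yielding that the shell $\{E-O(\kappa^{-7/5})\le\cdot\le E\}$ has probability bounded below by $N^{-c}$ for some fixed $c$ --- the shell width $\kappa^{-7/5}$ and the energy standard deviation being comparable powers of $N$. Conditioning then inflates the failure probability by at most $N^{c}$, which is absorbed since the unconditioned bound was $N^{-(k+c)}$. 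Making this local limit theorem rigorous --- with Edgeworth--type remainders uniform in $N$ and correct bookkeeping of the arithmetic (lattice) structure, which is where the integrality of $N$ and of the labels $l$ really enters through Hardy--Ramanujan--type asymptotics for the number of admissible partitions and for the normalizing partition function --- is the technical heart; the rest is the standard maximum--entropy plus large--deviation scheme.
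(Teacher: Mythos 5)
Your plan is, in substance, the paper's own argument, repackaged in the modern ``equivalence of ensembles'' language: the paper's upper bound for $\cN(M,N,\Delta)$ in \eqref{eq25:v851}--\eqref{eq31:v851} is exactly the exponential tilting / Chernoff step you describe (with tilt parameters $\beta,\mu$ playing the role of your Lagrange multipliers and the Poisson product ensemble appearing implicitly when the sum over $\{N_{jk}\}$ factorizes), and the paper's lower bound \eqref{eq34:v851}--\eqref{eq45:v851}, obtained by writing the two Kronecker deltas as Fourier integrals and running a stationary-phase argument (Lemma~\ref{l1:v851}), is precisely the local limit theorem you invoke to recondition. Dividing the two estimates gives the $N^{-k}$ bound, just as in your scheme. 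So the decomposition and the key ingredients coincide; the two presentations buy the same thing.

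The one place where your write-up is imprecise in a way that matters is the deconditioning step. You say you will pass from the independent-Poisson ensemble ``back to the ensemble conditioned on the energy shell,'' and you formulate the local limit theorem only for the single lattice variable $\kappa U_0\sum j N_{jk}$. But the target ensemble is the Poisson product conditioned on \emph{two} constraints simultaneously, $\sum_{jk}N_{jk}=N$ and the energy condition \eqref{eq20:v851}; the multinomial weight $N!/\prod N_{jk}!$ is itself already a conditioning on the first. Hence the lower bound you need is for the \emph{joint} point/shell probability, which requires a two-dimensional local limit theorem — verifying nondegeneracy of the $2\times2$ covariance matrix of $(\sum N_{jk},\sum jN_{jk})$ under the tilted measure, and a uniform bound away from the stationary point. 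This is exactly what the paper's Lemma~\ref{l1:v851} supplies: the integral representation \eqref{eq34:v851} is two-dimensional in $(\varphi,\psi)$, item~3 proves the Hessian \eqref{eq41:v851} is uniformly negative definite (using linear independence of $1$ and $x$ in $L^2(e^{-x}x\,dx)$), and item~4 gives the global decay off the stationary point that replaces your proposed Edgeworth remainder. With that correction your plan closes; without it the deconditioning factor $N^{c}$ is not justified.
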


The same assertion is also valid for any large velocity interval,
i.e., $v_2>\delta>0$, where $\delta$ is independent of~$N$.

By analogy with the term ``convergence in measure,'' we can state
that, in \eqref{eq15:v851}, there is an ``estimate in measure.''
In fact, the physical formula~\eqref{eq1:v851} can be rewritten
in the more exact form
\begin{equation}
\label{eq16:v851} \cN_{\Delta v}
=\int_{v_1}^{v_1+O(N^{-1/2+\delta})} \biggl(\frac{m}{2\pi
kT}\biggr)^{3/2}
e^{-(m(v^2_x+v^2_y+v^2_z))/2kT}\,dv_x\,dv_y\,dv_z,
\end{equation}
where $\Delta=O(N^{-1/2+\delta})$ and $\delta>0$.

In usual probability notation, the theorem can be restated as
follows.

\begin{theorem}
\label{t2:v851} The following relation holds:
\begin{align}
&\mathsf{P}\Biggl(\cN(|v_1|,|v_1|+O(N^{(-1/2)+\delta}))
-\frac{\int_{|v_1|}^{|v_1|+O(N^{(-1/2)+\delta})}
e^{-mv^2/2kT}\,dv_1\,dv_2\,dv_3}
{\int_{-\infty}^{-\infty}e^{-mv^2/2kT}\,dv_1\,dv_2\,dv_3}
\ge\frac{\sqrt{\ln N}}{\sqrt{N}} |{\ln\ln N}|^\varepsilon\Biggr)
\nonumber \\&\qquad =O(N^{-k}), \label{eq17:v851}
\end{align}
where $k$ is any number, $\varepsilon>0$ is an arbitrarily small
number, $\delta$ is any number, and $|v_1|\ge\nobreak0$ are
arbitrary velocities. Here $\mathsf P$ is the Lebesgue measure of
the phase volume defined in parentheses in \eqref{eq17:v851} with
respect to the total volume.
\end{theorem}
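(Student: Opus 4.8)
The plan is to read Theorem~\ref{t2:v851} as a sharp law of large numbers --- a concentration estimate with Gaussian-order fluctuations --- for the empirical velocity distribution of the combinatorial ensemble set up in \eqref{eq13:v851}--\eqref{eq14:v851}, proved by combining the equivalence of ensembles with a local limit theorem and choosing the error term precisely so that the resulting tail beats every power $N^{-k}$.

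First I would pin down the model. Distributing $N$ labelled ``balls'' in all possible ways among the energy cells --- a cell of level $j$ occurring with multiplicity $[j^{1/2}]$, the phase-volume degeneracy matching $v^2\,dv\leftrightarrow E^{1/2}\,dE$, and carrying energy $\kappa U_0 j$ --- and then imposing \eqref{eq14:v851} is exactly the microcanonical ensemble: $N$ i.i.d.\ draws conditioned on the total energy lying in the admissible window $\{E-O(\kappa^{-7/5}),E\}$. Then $\cN(|v_1|,|v_1|+O(N^{-1/2+\delta}))$ is the empirical fraction $T_N/N$ with $T_N=\#\{i:X_i\ \text{lies in the shell}\}$, and $\mathsf P$ in \eqref{eq17:v851} is precisely this conditional law (the Lebesgue phase volume of the event over that of \eqref{eq14:v851}). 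The Gibbs conditioning principle identifies the conditioned single-particle law as the exponential tilt proportional to $(\text{degeneracy})\,e^{-\beta\xi}$, with $\beta$ the Lagrange multiplier pinning the energy; after the rescaling of \eqref{eq2:v851}--\eqref{eq3:v851} and the identification $\xi=m|v|^2/2$ this tilt is the Maxwell density, so the ratio $p_\ast$ subtracted in \eqref{eq17:v851} equals the conditioned mean $\mathbb E[\,T_N\mid\text{window}\,]/N$ up to an $O(1/N)$ Edgeworth/boundary correction, which is below the target accuracy.

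Second I would prove the concentration. Since only the single linear statistic ``total energy'' is conditioned on, I would invoke the bivariate local limit theorem for the pair $\bigl(T_N,\ \sum_i\xi(X_i)\bigr)$: on the energy window $T_N$ is approximately Gaussian with mean $Np_\ast+O(1)$ and variance of order $Np_\ast$ (regressing out the energy only lowers the variance), so
\[
\mathsf P\bigl(|\cN - p_\ast|\ge t \mid \text{window}\bigr)\le N^{C}\,e^{-cNt^2},
\]
where $c>0$ depends only on the fixed velocity data and the factor $N^{C}$ is the loss from dividing by $\mathsf P(\text{window})\ge N^{-C}$, a local-CLT lower bound with a constant exponent. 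Taking $t=\sqrt{\ln N/N}\,|{\ln\ln N}|^{\varepsilon}$ gives $e^{-cNt^2}=N^{-c|\ln\ln N|^{2\varepsilon}}$, and since $|\ln\ln N|^{2\varepsilon}\to\infty$ this exponent eventually exceeds $C+k$ for any fixed $k$, yielding the bound $O(N^{-k})$ of \eqref{eq17:v851}; this is exactly why the factor $|{\ln\ln N}|^{\varepsilon}$, not just $\sqrt{\ln N}$, is needed. For a large velocity interval ($v_1,v_2$ independent of $N$, $p_\ast=\Theta(1)$) the argument is verbatim, and for the shrinking shell of width $O(N^{-1/2+\delta})$ the constant $c$ only grows, so the exponent only improves.

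The step I expect to be the main obstacle --- and where the text's insistence on the integrality of $N$, of $l_E$, and of the truncations $[l^{1/2}]$, together with the appeal to ``probabilistic number theory,'' becomes essential --- is making this local limit theorem (hence the equivalence of ensembles) quantitative and uniform in the regime where the number of cells $l_E\asymp\kappa^{-7/5}$ itself grows with $N$. Concretely one must control the arithmetic/lattice span of the admissible cell energies so that a local (not merely an integral) CLT is available; verify that the window $\{E-O(\kappa^{-7/5}),E\}$ is at once wide enough for $\mathsf P(\text{window})$ to be only polynomially small and narrow enough for the tilted law to be the correct Maxwellian with conditioned mean $Np_\ast+O(1)$; and bound the Edgeworth corrections to the conditioned marginal uniformly over the relevant shells. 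Once that analytic core is in place, what remains is the Bernstein-type estimate above and the Stirling/Euler--Maclaurin bookkeeping already performed around \eqref{eq8:v851}--\eqref{eq12:v851}.
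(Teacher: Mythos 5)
Your outline is essentially the paper's own argument, up to a change of vocabulary: the paper's Chernoff-type upper bound on the deviation count $\cN(M,N,\Delta)$ via exponential tilting (eqs.~\eqref{eq25:v851}--\eqref{eq31:v851}) is exactly your Bernstein estimate $e^{-cNt^2}$ with $\Delta=Nt$, and the paper's saddle-point lower bound on $\cN(M,N)$ via the Kronecker-delta integral representation and Lemma~\ref{l1:v851} (eqs.~\eqref{eq33:v851}--\eqref{eq51:v851}) is precisely the quantitative bivariate local limit theorem that you correctly flag as the main analytic step, with the polynomial prefactor loss coming from comparing the prefactors $C\sqrt N$ and $C\beta\Lambda^{-1}\sqrt N$ of the two bounds. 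Your remark that the $|\ln\ln N|^{\varepsilon}$ factor is what pushes the exponent past every fixed power $N^{-k}$ matches the paper's choice of $\Delta$ and $c$ in \eqref{eq29:v851}--\eqref{eq30:v851}.
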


These estimates are sharp (unimprovable). The theorem belongs
to~number theory. It has no relation to particle dynamics in
which the Maxwell distribution is derived from the Boltzmann
equation, which has not been is rigorously justified up to now.
The usual dynamical approach and its criticism is contained in
Kozlov's book~\cite{2:v851}.

Nevertheless, it is natural that, under certain conditions. the
dynamical system attains the most probable (from the point of
view of probability number theory) distribution. This
consideration can be useful for the dynamical approach.

Let us present a sufficiently elementary proof the theorem on the
Maxwell distribution, without, essentially, referring to
important and elegant results of number theory based on the
Meinardus theorem~\cite{3:v851}, Theorem 6.2  and on Vershik's
elegant theory of multiplicative measures~\cite{4:v851}, which
could help us avoid some inessential and deliberate
manifestations of integrality (such as taking the integer part
of~$l^{1/2}$).

On the other hand, the given estimates, which the author used in
his papers dealing with economics and
linguistics~\cite{5:v851},~\cite{6:v851}, are more understandable
to readers that are not experts in number theory and probability
theory, in particular, to physicists and specialists in analysis.

The proof is based on the estimates given by the author
in~\cite{7:v851} and on a theorem similar to the Meinardus
theorem.

For the $6$-dimensional case, a detailed proof was given
in~\cite{8:v851}. Essentially, we repeat this proof for the
$3$-dimensional case.

Let us study the system defined as follows. For energy levels
$j=0,1,2,\dots$ of multiplicities
\begin{equation}
\label{eq18:v851} q_i=[j^{1/2}],\qquad j=0,1,2,\dots,
\end{equation}
we consider all possible collections $\{N_{jk}\}$ of nonnegative
integers~$N_{jk}$, $j=0,1,2,\dots$, $k=1,\dots,q_i$ satisfying
the conditions
\begin{gather}
\label{eq19:v851} \sum_{j=0}^\infty N_{jk}=N ;
\\
\label{eq20:v851} \sum_{j=0}^\infty\sum_{k=1}^{q_i}jN_{jk}
\equiv\sum_{j=1}^\infty\sum_{k=1}^{q_i}j N_{jk}\le M,
\end{gather}
where $N$ and $M$ are given positive numbers (which can be
assumed integers without loss of generality) All such collections
are assumed equiprobable.

Denote the dimensionless quantity $M=E/U_0$ and define the
numbers~$\beta$ and $\xi$ as solutions of the system of equations
$$
M=\xi^{-1}\sum_{j=1}^\infty j[j^{1/2}]e^{-\beta j},\qquad
N=\xi^{-1}\sum_{j=0}^\infty[j^{1/2}]e^{-\beta j}.
$$

\begin{remark}
By the Euler--Maclaurin formula, we have
\begin{align*}
 N&=\xi^{-1}\int_0^\infty x^{1/2}e^{-\beta x}
\,dx\,(1+O(\beta)) \cong\frac12\mspace{1mu}\sqrt\pi\mspace{2mu}
\beta^{-3/2}\xi^{-1},
\\
 M&=\xi^{-1}\int_0^\infty x^{3/2}e^{-\beta x}
\,dx\,(1+O(\beta)) \cong\frac34\mspace{1mu}\sqrt\pi\mspace{2mu}
\beta^{-5/2}\xi^{-1}.
\end{align*}

Hence $\beta$ and $\xi$,
$$
\beta\cong\frac23\frac NM\,,\qquad \xi\cong\frac12\mspace{1mu}
\sqrt\pi\mspace{2mu}\frac{M^{3/2}}{N^{5/3}}\,.
$$
\end{remark}

\begin{remark}
As an example, we consider the Maxwell distribution.
Note that the same argument can be used for any arbitrary Gibbs distribution,
but with more cumbersome estimates.
Just the same estimates are obtained for the Gibbs distribution
corresponding to the Hamiltonian $E=(p^2+q^2)^2$,
because the phase cells for this Hamiltonian
satisfy the same relations \eqref{eq6:v851}--\eqref{eq14:v851}.

We consider sufficiently general classical Hamiltonian function
$H(p,q)$, where $q\in R^3$,  $p\in R^3$,
under the following two assumptions:

1) $H(p,q) \to \infty$  as $|p|+|q| \to \infty$ not slower than
$(|p|+|q|)^\alpha$ for some $\alpha>0$;

2) the function
$$
V(\Lambda) = \int_{H(p,q)\leq\Lambda} dpdq =\int_0^\Lambda dE
\iint \delta\left(E-H(p,q)\right) dpdq,
$$
under the assumption that $V'(\Lambda)$ is a sufficiently smooth function,
determines a phase cell
invariant under the Hamiltonian system corresponding
to the Hamiltonian $H(p,q)$.

We choose a partition such that $E_{l+1}-E_l = E_0$. Then
\begin{equation}\label{ad1}
E_l=E_0 (l+1).
\end{equation}
Let~$N_l$ be an ordered sample with replacement to the cell
$E_{l+1}-E_l$.
An \textit{ordered sample} with replacement from~$N$ balls
to cells invariant under the Hamiltonian system
(to ''energy boxes'')
$$
\int_{E_l}^{E_{l+1}} H(p,q) dp dq = \int_{E_0l}^{E_0(l+1)}\lambda
V'(\lambda) d \lambda
$$
\textit{leads to the state}
\begin{equation}\label{ad2}
\sum_l N_lE_lq_l \leq \cE_N, \quad q_l\cong  C[V'(E_0 l)],
\end{equation}
where $C$ is a constant.
Then the proof and the estimates are just the same
as in Theorem~\ref{t2:v851}.
\end{remark}

Denote by $\cN(M,N)$ the total number of collections $\{N_{jk}\}$
satisfying the constraints~\eqref{eq19:v851}~\eqref{eq20:v851}.

We assume everywhere that the parameters~$\beta$ and $\xi$
satisfy the relation
\begin{equation}
\label{eq21:v851} \xi<\beta^{-3/2+\varepsilon},
\end{equation}
for an arbitrary (but fixed) $\varepsilon>0$.

Let
$$
\mu=\ln\xi,\qquad \text{so that}\quad \xi=e^{\mu}.
$$

Suppose that $\cM$ is the set of ordered samples satisfying
conditions~\eqref{eq19:v851} and \eqref{eq20:v851}.

For the numbers $\cN(N,M)$ of such variants, we obtain the
following estimate:
\begin{equation}
\label{eq22:v851} \cN(M,N) \le C\sqrt N\exp\{N\ln N+\beta M+\mu
N\}.
\end{equation}
 Indeed,
\begin{align*}
\cN(M,N) &=N!\sum_{\{N_{jk}\}\in\cM} \frac1{
\prod_{j=0}^\infty\prod_{k=1}^{q_j}N_{jk}!}
\le N!\mspace{2mu} e^{\beta M+\mu N}\sum_{\{N_{jk}\}}
\frac{\exp\bigl\{-\sum_{j=0}^{\infty}
\sum_{k=1}^{q_j}N_{jk}(\beta j+\mu)\bigr\}} {
\prod_{j=0}^\infty\prod_{k=1}^{q_j}N_{jk}!}
\\&
=N!\mspace{2mu}e^{\beta M+\mu N}
\prod_{j=0}^\infty\prod_{k=1}^{q_j} \sum_{N_{jk}=0}^\infty
\frac{e^{-N_{jk}(\beta j+\mu)}}{N_{jk}!}
=N!\mspace{2mu}e^{\beta M+\mu N}
\prod_{j=0}^\infty\prod_{k=1}^{q_j} e^{e^{-\beta j-\mu}}
\\&
=N!\exp\biggl\{\beta M+\mu N +\sum_{j=0}^{\infty}q_je^{-\beta
j-\mu}\biggr\}
=N!\exp\{\beta M+\mu N+N\}
\\[1mm]&
\le C\sqrt N\exp\{N\ln N+\beta M+\mu N\} \qquad\qquad\qquad\quad
\text{(by Stirling's formula)}.
\end{align*}

Suppose that $\cM_\Delta\subset\cM$ is the subset of variants
such that
\begin{equation}
\label{eq23:v851} \biggl|\sum_{j=0}^{l}\sum_{k=1}^{q_j}
(N_{jk}-\overline N_{jk})\biggr|>\Delta,
\end{equation}
where
\begin{equation}
\label{eq24:v851} \overline N_{jk} =e^{-\beta
j-\mu}\equiv\frac{\overline N_j}{q_j}\,.
\end{equation}
For the number $\cN(M,N,\Delta)$ of the sample from~$\cM_\Delta$,
we obtain the estimate {\begin{align} \nonumber \cN(M,N,\Delta)
&\le N!\exp\biggl\{\beta M+\mu N-c\Delta +\sum_{j=l+1}^\infty
q_je^{-\beta j-\mu}\biggr\} \nonumber
\\&\qquad \times\biggl(\exp\biggl\{\sum_{j=0}^{l} (q_je^{-\beta
j-\mu+c}-c\overline{N}_j)\biggr\} +\exp\biggl\{\sum_{j=0}^{l}
(q_je^{-\beta j-\mu-c}+c\overline{N}_j)\biggr\}\biggr)
\label{eq25:v851}
\end{align}}
for $0<c<\mu$, where $\overline N_j$ is given
by~\eqref{eq24:v851}.

Further, as in~\cite{NonlinearAv}, we take two terms of the
expansion in the Taylor series
\begin{equation}
\label{eq26:v851} q_je^{-\beta j-\mu\pm c}\mp c\overline{N}_j
=q_je^{-\beta j-\mu}(e^{\pm c}\mp c) =q_je^{-\beta j-\mu}
\biggl(1+\frac{c^2}2\mspace{1mu}e^{\pm\theta c}\biggr),
\end{equation}
where $\theta(i)$ is some midpoint,
$\theta\equiv\theta(c)\in(0,1)$.

If $c\le\min\{\mu/2,1\}$, then this implies the inequality
\begin{equation}
\label{eq27:v851} \sum_{j=0}^{l}q_je^{\beta j+\mu-\theta_j c}
\le2Ke^{-\mu}\beta^{-3/2},
\end{equation}
where $K$ is a constant.

Therefore,
\begin{equation}
\label{eq28:v851} \cN(N,M,\Delta) \le C\sqrt N\exp\{N\ln N+\beta
M+\mu N\} \exp\{-c\Delta+Kc^2e^{-\mu}\beta^{-3/2}\}.
\end{equation}
 We substitute
\begin{equation}
\label{eq29:v851} \Delta=\sqrt{N\ln N}\mspace{2mu} |{\ln\ln
N}|^\varepsilon \asymp e^{-\mu/2}\beta^{-3/2} \sqrt{\ln
N}\mspace{2mu}|{\ln\ln N}|^\varepsilon
\end{equation}
and
\begin{equation}
\label{eq30:v851} c=\frac{\beta^{3/2}e^\mu\Delta}{2K}
\asymp\beta^{3/4}e^{\mu/2} \sqrt{\ln N}\mspace{2mu}|{\ln\ln
N}|^\varepsilon
\end{equation}
in \eqref{eq28:v851}.
 This implies that,
in particular,
\begin{equation}
\label{eq31:v851} \cN(N,M,\Delta) \le C_k\sqrt N\exp\{N\ln
N+\beta M+\mu N\}N^{-k}
\end{equation}
for any~$k$.

Let us now find a lower bound for these quantities.

We estimate the number of samples $\cN_0(M,N)<\cN(M,N)$
satisfying conditions~\eqref{eq19:v851} and \eqref{eq20:v851};
moreover, in the last inequality, we consider the equality
\begin{equation}
\label{eq32:v851}
\sum_{j=0}^{\infty}\sum_{k=1}^{q_j}N_{jk}=N,\qquad
\sum_{j=0}^{\infty}\sum_{k=1}^{q_j}jN_{jk}=M.
\end{equation}
 Suppose that~$\cM_0$
is the set of collections of occupation numbers
satisfying~\eqref{eq32:v851}. Then
\begin{align}
\nonumber \cN_0(M,N) &=\sum_{\{N_{jk}\}\in\cM_0} \frac{N!}{
\prod_{j=0}^\infty\prod_{k=1}^{q_j}N_{jk}!}
\\&
=N!\sum_{\{N_{jk}\}} \frac{\delta\bigl(N,\sum_{j=0}^{\infty}
\sum_{k=1}^{q_j}N_{jk}\bigr)
\delta\bigl(M,\sum_{j=0}^{\infty}\sum_{k=1}^{q_j} jN_{jk}\bigr)}
{\prod_{j=0}^\infty\prod_{k=1}^{q_j}N_{jk}!}\,. \label{eq33:v851}
\end{align}
Here the sum in the second row is taken over all finite
collections of nonnegative occupation numbers and
$\delta(m,n)\equiv\delta_{mn}$ is the Kronecker delta.

Substitute the integral representation
$$
\delta_{mn}=\frac s{2\pi}\int_{-\pi/s}^{\pi/s}
e^{(isx+\omega)(m-n)}\,dx
$$
of the Kronecker symbol (where~$s$ and $\omega$ are arbitrary
nonzero real numbers) into~\eqref{eq33:v851}, choosing $s=1$ and
$\omega=\mu$ for the first factor and $s=\omega=\beta$ for the
second factor. Then, for $\cN_0(M,N)$, we obtain the integral
representation
\begin{equation}
\label{eq34:v851} \cN_0(M,N) =\frac{\beta N!\mspace{2mu}e^{\beta
M+\mu N}}{4\pi^2} \int_{-\pi/\beta}^{\pi/\beta}
\biggl(\int_{-\pi}^{\pi}e^{\Lambda\Phi(\varphi,\psi)}
\,d\psi\biggr)\,d\varphi,
\end{equation}
where
\begin{gather}
\Lambda=\beta^{-3/2}e^{-\mu}\asymp N, \label{eq35:v851}
\\
\label{eq36:v851} \Phi(\varphi,\psi) =i\beta^{5/2}e^\mu M\varphi
+i\beta^{3/2}e^\mu N\psi +\beta^{3/2}\sum_{j=0}^{\infty}
q_je^{-\beta j-i(\psi+\beta j\varphi)}.
\end{gather}
(The sign $\asymp N$ means that there exist constants~$c_1$
and~$c_2$ such that $c_1N\le\lambda\le c_2N$).

Indeed, the substitution described above yields
\begin{align}
\cN_0(M,N)
&=\frac{\beta N!\mspace{2mu} e^{\beta M+\mu N}}{4\pi^2}
\int_{-\pi/\beta}^{\pi/\beta} \biggl(\int_{-\pi}^{\pi}
\sum_{\{N_{jk}\}} \frac{e^{i\beta\varphi M+i\psi N}}
{\prod_{j=0}^\infty\prod_{k=1}^{q_j}N_{jk}!}
\nonumber\\&\hphantom{{}\qquad\times \int_{-\pi/\beta}^{\pi/\beta}
\biggl(\int_{-\pi}^{\pi}}\qquad\qquad
\times\exp\biggl\{-\sum_{j=0}^\infty \sum_{k=1}^{q_j}N_{jk}
(\beta j+\mu+i\beta j\varphi+i\psi)\biggr\}\,
d\psi\biggr)\,d\varphi \nonumber\\& =\frac{\beta N!\mspace{2mu}
e^{\beta M+\mu N}}{4\pi^2} \int_{-\pi/\beta}^{\pi/\beta}
\biggl(\int_{-\pi}^{\pi}e^{i\beta\varphi M+i\psi N}
\prod_{j=0}^\infty\prod_{k=1}^{q_j} \sum_{N_{jk}=0}^\infty
\frac{e^{-N_{jk}(\beta j+\mu+i\beta j\varphi+i\psi)}}
{N_{jk}!}\,d\psi\biggr)\,d\varphi \nonumber\\& =\frac{\beta
N!\mspace{2mu} e^{\beta M+\mu N}}{4\pi^2}
\int_{-\pi/\beta}^{\pi/\beta} \biggl(\int_{-\pi}^{\pi}
e^{i\beta\varphi M+i\psi N} \prod_{j=0}^\infty\prod_{k=1}^{q_j}
\exp\{e^{-(\beta j+\mu+i\beta j\varphi+i\psi)}\}
\,d\psi\biggr)\,d\varphi \nonumber\\& =\frac{\beta N!\mspace{2mu}
e^{\beta M+\mu N}}{4\pi^2} \int_{-\pi/\beta}^{\pi/\beta}
\biggl(\int_{-\pi}^{\pi} \exp\biggl\{i\beta\varphi M+i\psi N
\sum_{j=0}^{\infty} q_je^{-(\beta j+\mu+i\beta
j\varphi+i\psi)}\biggr\} \,d\psi\biggr)\,d\varphi.
\label{eq37:v851}
\end{align}

\begin{lemma}
\label{l1:v851} The phase function $\Phi(\varphi,\psi)$ defined
by~\eqref{eq36:v851} possesses the following properties:
\begin{itemize}
\item[{\rm1.}]
All of its derivatives are uniformly bounded for the values
of~$\beta$ and $\xi$ satisfying inequality~\eqref{eq21:v851}.
\item[{\rm2.}]
The phase function has a stationary point
$\varphi=0\,\operatorname{mod}2\pi/b$,
$\psi=0\,\operatorname{mod}2\pi$.
\item[{\rm3.}] The matrix~$\Phi''(0,0)$
of second derivatives of the phase function at the stationary
point is nondegenerate and is strictly negative definite uniformly
in the parameters~$\beta$ and $\xi$ satisfying
inequality~\eqref{eq21:v851}.
\item[{\rm4.}]
the imaginary part of the phase function at the stationary point
is zero and its real part attains an absolute maximum there;
moreover, for any~$\gamma>0$, there exists a~$\delta>0$
independent of the parameters~$\beta$ and $\xi$ satisfying
inequality~\eqref{eq21:v851} such that
\begin{equation}
\label{eq38:v851} \operatorname{Re}{\Phi}(\varphi,\psi)
<\operatorname{Re}{\Phi}(0,0)-\delta\qquad \text{for}\quad
\operatorname{dist}((\varphi,\psi),(0,0))>\gamma.
\end{equation}
\end{itemize}
\end{lemma}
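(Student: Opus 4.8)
The plan is to collapse everything onto a single exact identity. Substituting the \emph{defining} relations $N=\xi^{-1}\sum_{j\ge0}[j^{1/2}]e^{-\beta j}$ and $M=\xi^{-1}\sum_{j\ge0}j[j^{1/2}]e^{-\beta j}$ of $\beta$ and $\xi$ (not the Euler--Maclaurin approximations of the Remark) into \eqref{eq36:v851}, and writing $w_j:=\beta^{3/2}[j^{1/2}]e^{-\beta j}\ge0$ and $\theta_j=\theta_j(\varphi,\psi):=\psi+\beta j\varphi$, one finds $\beta^{3/2}e^{\mu}N=\sum_j w_j$ and $\beta^{5/2}e^{\mu}M=\sum_j(\beta j)w_j$, hence
\begin{equation}
\Phi(\varphi,\psi)=\sum_{j\ge0}w_j\,g(\theta_j),\qquad g(\theta):=e^{-i\theta}+i\theta.
\end{equation}
Thus $\Phi$ is a nonnegative superposition, with $\xi$-independent weights $w_j$, of copies of one universal entire function $g$ along the affine phases $\theta_j$; here $g(0)=1$, $g'(\theta)=i(1-e^{-i\theta})$ (so $g'(0)=0$), $g''(\theta)=-e^{-i\theta}$ (so $g''(0)=-1$), $\|g^{(k)}\|_{\infty}\le2$ for all $k\ge1$, $\operatorname{Re}g(\theta)=\cos\theta$, $\operatorname{Im}g(\theta)=\theta-\sin\theta$. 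The only additional input is the behaviour of the moments $\mathsf{m}_k(\beta):=\sum_{j\ge0}w_j(\beta j)^k$: by the Euler--Maclaurin formula $\mathsf{m}_k(\beta)\to\Gamma(k+\tfrac32)$ as $\beta\to0$, and since \eqref{eq21:v851} together with $\mu>0$ (used in the range $0<c<\mu$) and $\varepsilon<\tfrac32$ forces $\beta<1$, each $\mathsf{m}_k$ is bounded and $\mathsf{m}_0$ is bounded below by a positive constant, all uniformly in the admissible $(\beta,\xi)$; note nothing here depends on $\xi$.

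Properties 1--3 are then immediate. \emph{Property~1}: differentiating, $\partial_\varphi^a\partial_\psi^b\Phi=\sum_j w_j(\beta j)^a g^{(a+b)}(\theta_j)$, so $|\partial_\varphi^a\partial_\psi^b\Phi|\le2\,\mathsf{m}_a(\beta)$, uniformly bounded. \emph{Property~2}: at $\varphi=2\pi m/\beta$, $\psi=2\pi n$ ($m,n\in\mathbb Z$) every $\theta_j\in2\pi\mathbb Z$, so $g'(\theta_j)=0$ and $\partial_\varphi\Phi=\partial_\psi\Phi=0$; since $\Phi$ is $(2\pi/\beta,2\pi)$--periodic, one works on the fundamental cell $(-\pi/\beta,\pi/\beta]\times(-\pi,\pi]$. \emph{Property~3}: $g''(0)=-1$ gives
\begin{equation}
\Phi''(0,0)=-\sum_{j\ge0}w_j\binom{\beta j}{1}\binom{\beta j}{1}^{\!\top}=-\begin{pmatrix}\mathsf{m}_2(\beta)&\mathsf{m}_1(\beta)\\ \mathsf{m}_1(\beta)&\mathsf{m}_0(\beta)\end{pmatrix},
\end{equation}
a negative combination of rank-one matrices, strictly negative definite because the vectors $(\beta j,1)^{\!\top}$ span $\mathbb R^2$ ($[j^{1/2}]\ge1$ for $j\ge1$, the $\beta j$ not all equal), with $\mathsf{m}_0\mathsf{m}_2-\mathsf{m}_1^2>0$ by Cauchy--Schwarz; uniform negative definiteness follows from $\mathsf{m}_0\mathsf{m}_2-\mathsf{m}_1^2\to\Gamma(\tfrac72)\Gamma(\tfrac32)-\Gamma(\tfrac52)^2=\tfrac{3\pi}{8}>0$, continuity and positivity of this quantity on $(0,1)$, and boundedness of $\mathsf{m}_0+\mathsf{m}_2=-\operatorname{tr}\Phi''(0,0)$.

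\emph{Property~4} is the substantive one. At the stationary point $\operatorname{Im}\Phi(0,0)=\sum_j w_j\operatorname{Im}g(0)=0$, and $\operatorname{Re}\Phi(\varphi,\psi)=\sum_j w_j\cos\theta_j\le\sum_j w_j=\operatorname{Re}\Phi(0,0)$, with equality in the cell only at $(0,0)$ (since $\cos\theta_1=\cos\theta_2=1$ forces $\beta\varphi,\psi\in2\pi\mathbb Z$). For the uniform gap \eqref{eq38:v851}, set $\widehat{P}_{\beta}(\varphi):=\mathsf{m}_0(\beta)^{-1}\sum_j w_j e^{i\beta j\varphi}$, so $\operatorname{Re}\Phi(0,0)-\operatorname{Re}\Phi(\varphi,\psi)=\mathsf{m}_0(\beta)\bigl(1-\operatorname{Re}(e^{i\psi}\widehat{P}_{\beta}(\varphi))\bigr)$; the point of the integer multiplicities is the closed form $\sum_{j\ge1}[j^{1/2}]z^j=(1-z)^{-1}\sum_{m\ge1}z^{m^2}$ ($|z|<1$), which with $z=e^{-\beta}e^{i\beta\varphi}$ yields
\begin{equation}
\widehat{P}_{\beta}(\varphi)=\frac{1-e^{-\beta}}{1-z}\cdot\frac{\sum_{m\ge1}z^{m^2}}{\sum_{m\ge1}e^{-\beta m^2}}.
\end{equation}
The second factor has modulus $\le1$, while $|1-z|^2=(1-e^{-\beta})^2+2e^{-\beta}(1-\cos\beta\varphi)\ge(1-e^{-\beta})^2\bigl(1+\tfrac{4e^{-\beta}}{\pi^2}\varphi^2\bigr)$ by the elementary inequality $1-\cos\alpha\ge\tfrac{2}{\pi^2}\alpha^2$ on $[-\pi,\pi]$ and $1-e^{-\beta}\le\beta$; hence $|\widehat{P}_{\beta}(\varphi)|\le\rho(\varphi):=\bigl(1+\tfrac{4e^{-1}}{\pi^2}\varphi^2\bigr)^{-1/2}<1$ for $\varphi\ne0$. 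Now fix $\gamma>0$ and a threshold $\varphi_0=\varphi_0(\gamma)\le\gamma\sqrt3/2$. If $|\varphi|\ge\varphi_0$, then $1-\operatorname{Re}(e^{i\psi}\widehat{P}_{\beta}(\varphi))\ge1-\rho(\varphi_0)>0$. If $|\varphi|<\varphi_0$, then $\operatorname{dist}((\varphi,\psi),(0,0))>\gamma$ forces $|\psi|\ge\gamma/2$, and the expansion $\widehat{P}_{\beta}(\varphi)=1+i(\mathsf{m}_1/\mathsf{m}_0)\varphi+O(\varphi^2)$ (uniform in $\beta$, since $|\widehat{P}_{\beta}''|\le\mathsf{m}_2/\mathsf{m}_0$ is bounded) gives $1-\operatorname{Re}(e^{i\psi}\widehat{P}_{\beta}(\varphi))=2\sin^2(\psi/2)+O(\varphi)\ge2\sin^2(\gamma/4)-C\varphi_0\ge\sin^2(\gamma/4)$ once $\varphi_0$ is small enough. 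Multiplying by $\mathsf{m}_0(\beta)$, which is bounded below by a positive constant, gives \eqref{eq38:v851} with an explicit $\delta(\gamma)>0$ independent of $\beta$ and $\xi$.

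The main obstacle is precisely this last uniformity: controlling the exponential sum $\sum_j w_j e^{i\beta j\varphi}$ on the \emph{growing} range $|\varphi|\le\pi/\beta$. A naive Euler--Maclaurin comparison with $\tfrac{\sqrt\pi}{2}(1-i\varphi)^{-3/2}$ breaks down near $\beta\varphi\approx\pm\pi$, but the identity $\sum_j[j^{1/2}]z^j=(1-z)^{-1}\sum_m z^{m^2}$ reduces the needed estimate to the trivial $\bigl|\sum_m z^{m^2}\bigr|\le\sum_m|z|^{m^2}$ together with $1-\cos\alpha\ge\tfrac2{\pi^2}\alpha^2$ --- this is where the paper's insistence on integrality pays off. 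Everything else is the calculus of $g(\theta)=e^{-i\theta}+i\theta$ and standard moment asymptotics, exactly as in the $6$-dimensional treatment of~\cite{8:v851}.
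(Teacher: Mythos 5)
Your proof is correct, and for Property~4 it takes a genuinely different---and more complete---route than the paper's. The paper proceeds from $\operatorname{Re}\Phi(0,0)-\operatorname{Re}\Phi(\varphi,\psi)=\beta^{3/2}\sum q_je^{-\beta j}(1-\cos(\psi+\beta j\varphi))$ by discarding all terms outside $j\in[x_1/\beta,x_2/\beta]$ and estimating the resulting trigonometric sum via the Dirichlet-kernel bound $\bigl|\sum_j e^{i\beta j\varphi}\bigr|\lesssim|\sin(\beta\varphi/2)|^{-1}$, which yields its display~\eqref{eq44:v851} and then waves at ``choosing $x_1$, $x_2$ suitably.'' That estimate degenerates precisely in the regime $|\varphi|\to0$ with $|\psi|$ bounded away from zero, because $\beta|\sin(\beta\varphi/2)|^{-1}\approx2/|\varphi|$ blows up, so the paper's lower bound becomes vacuous there; a separate argument is implicitly needed. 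You avoid this entirely: the decomposition $\Phi=\sum_jw_j\,g(\theta_j)$ with the single universal profile $g(\theta)=e^{-i\theta}+i\theta$ collapses Properties 1--3 to moment bookkeeping, and for Property~4 you exploit the combinatorial identity $\sum_{j\ge1}[j^{1/2}]z^j=(1-z)^{-1}\sum_{m\ge1}z^{m^2}$ (valid exactly because $[j^{1/2}]$ counts squares) to get the $\beta$-uniform pointwise bound $|\widehat P_\beta(\varphi)|\le\bigl(1+\tfrac{4e^{-1}}{\pi^2}\varphi^2\bigr)^{-1/2}$, and then close the small-$|\varphi|$/large-$|\psi|$ case by the uniform Taylor expansion $\widehat P_\beta(\varphi)=1+i(\mathsf m_1/\mathsf m_0)\varphi+O(\varphi^2)$. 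What the identity buys is exactly the uniformity in $\beta$ over the growing fundamental cell $|\varphi|\le\pi/\beta$, which the paper's Euler--Maclaurin-style comparison handles only implicitly; what it costs is generality (it is tied to $q_j=[j^{1/2}]$, whereas the paper's partial-sum trick works for any reasonable $q_j$). Your derivations of the reduction to $\beta<1$ from \eqref{eq21:v851} together with $\mu>0$, the moment limits $\mathsf m_k\to\Gamma(k+\tfrac32)$, and the Gram/Cauchy--Schwarz argument for Property~3 are all sound and essentially parallel the paper's.
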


\begin{proof}
1. The boundedness of the derivatives of the phase function is
proved by direct calculations.

\smallskip

2. To to verify that the point $(0,0)$ is a stationary point of
the phase function, let us calculate its first derivatives:
\begin{align}
\label{eq39:v851} \frac{\partial\Phi}{\partial\varphi}
&=i\beta^{5/2}e^\mu\biggl[
 M-e^{-\mu}\sum_{j=0}^{\infty}jq_j
e^{-\beta j-i(\psi+\beta j\varphi)}\biggr],
\\
\frac{\partial\Phi}{\partial\psi} &=i\beta^{3/2}e^\mu\biggl[
 N-e^{-\mu}\sum_{j=0}^{\infty}q_j
e^{-\beta j-i(\psi+\beta j\varphi)}\biggr]. \label{eq40:v851}
\end{align}
 For
$\varphi=\psi=0$, both derivatives vanish by the definition of
the parameters~$\beta$ and $\xi=e^\mu$.

\smallskip

3. The matrix~$\Phi''(0,0)$ is of the form
\begin{equation}
\label{eq41:v851} \Phi''(0,0)=-\sum_{j=0}^{\infty}
\beta^{3/2}q_je^{-\beta j}
\begin{pmatrix}
1 &\beta j
\\
\beta j &\beta^2j^2
\end{pmatrix}.
\end{equation}
 Let us estimate
this matrix as the matrix of the corresponding quadratic form as
follows:
\begin{equation}
\label{eq42:v851} \Phi''(0,0)\le
-\sum_{j=[x_1/\beta]}^{[x_2/\beta]} \beta^{3/2}q_je^{-\beta j}
\begin{pmatrix}
1 &\beta j
\\
\beta j &\beta^2j^2
\end{pmatrix},
\end{equation}
where $x_2>x_1>0$ are arbitrary fixed numbers.

For small~$\beta$, in view of the asymptotics $q_j\simeq j/2$ for
large~$j$, the matrix on the right-hand side can be calculated by
the Euler--Maclaurin formula, obtaining as a result, up to~$o(1)$,
the matrix
\begin{equation}
\label{eq43:v851} -\frac12
\begin{pmatrix}
\displaystyle \int_{x_1}^{x_2}xe^{-x}\,dx &\displaystyle
\int_{x_1}^{x_2}x^{3/2}e^{-x}\,dx
\\[4mm]
\displaystyle \int_{x_1}^{x_2}x^{3/2}e^{-x}\,dx &\displaystyle
\int_{x_1}^{x_2}x^{2}e^{-x}\,dx
\end{pmatrix} =
-\frac12\begin{pmatrix} (1,1) &(1,x)
\\
(x,1) &(x,x)
\end{pmatrix},
\end{equation}
where
$$
(u,v)=\int_{x_1}^{x_2}u(x)\overline v(x)e^{-x}x\,dx
$$
is the inner product $L^2([x_1,x_2],e^{-x}x)$.

Since the functions~$1$ and $x$ are linearly independent, the
matrix~\eqref{eq43:v851} is negative definite, which proves the
required assertion.

\smallskip

4. For $\Phi(\varphi,\psi)$, from formula~\eqref{eq36:v851} we
obtain
$$
\operatorname{Re}{\Phi}(0,0)
-\operatorname{Re}{\Phi}(\varphi,\psi)
=\beta^{3/2}\sum_{j=0}^{\infty} q_je^{-\beta j}(1-\cos(\psi+\beta
j\varphi)).
$$
All the summands on the right-hand side are nonnegative.
Therefore, omitting part of them and estimating the coefficients
$\beta q_je^{-\beta j}$ for the remaining summands, we obtain
\begin{align}
\nonumber \operatorname{Re}{\Phi}(0,0)
-\operatorname{Re}{\Phi}(\varphi,\psi) &\ge\operatorname{const}
\beta\sum_{j=[x_1/\beta]}^{[x_2/\beta]} (1-cos(j\varphi+\psi))
\\&
\ge\operatorname{const} \biggl(x_2-x_1-\beta
-\beta\biggl|\sin\frac{\beta\varphi}2\biggr|^{-1}\biggr).
\label{eq44:v851}
\end{align}
Choosing~$x_1$ and $x_2$ in a suitable way, we obtain the
required assertion. The lemma is proved.
\end{proof}

Using this lemma, we can calculate the integral~\eqref{eq34:v851}
by the saddle-point method and obtain a lower bound for the
number of ordered samples in the form
\begin{equation}
\label{eq45:v851} \cN(N,M) \ge C\beta\Lambda^{-1}\sqrt N
\exp\{N\ln N+\beta M+\mu N\}.
\end{equation}

Now we can estimate the integral~\eqref{eq37:v851}. By
Lemma~\ref{l1:v851}, all the derivative of the
function~$\Phi(\varphi)$ are uniformly bounded. In addition, if
$\Phi$ is expressed in the form $\Phi=\Phi_1+i\Phi_2$,
where~$\Phi_1$ and $\Phi_2$ are real, then,
$$
\Phi_1'(0)=0,\quad \Phi_1''(0)<-C<0,\qquad
\Phi_2'(0)=\Phi_2''(0)=0.
$$
Hence, for $|\varphi|\le\varepsilon$, where $\varepsilon>0$ is
sufficiently small, using the Taylor formula with remainder, we
obtain the estimates
\begin{gather}
\label{eq46:v851} \Phi(0)-C_1|\varphi|^2 \le\Phi_1(\varphi)
\le\Phi(0)-C_2|\varphi|^2,
\\
\label{eq47:v851} |\Phi_2(\varphi)|\le C_3|\varphi|^3
\end{gather}
where the~$C_j$ are positive constants independent of~$M$ and the
sequence~$\{{1}R_j{1}\}$.

Suppose that
$$
1=\psi_1(\varphi)+\psi_2(\varphi)
$$
is a nonnegative smooth partition of unity on the circle
$S^1\ni\varphi$ of radius~$b$ such that
$$
\operatorname{supp}\psi_1 \subset[-\varepsilon,\varepsilon]\qquad
\text{and}\qquad\psi_1(\varphi)=1\quad \text{for}\ \
\varphi\in\biggl[-\frac\varepsilon2\,, \frac\varepsilon2\biggr].
$$
Let us express the integral
$$
I=\int_{S^1}\exp\{\beta^{-3/2}\Phi(\varphi)\}\,d\varphi
$$
as the sum
$$
I=\int_{S^1}\exp\{\beta^{-3/2}\Phi(\varphi)\}
\psi_1(\varphi)\,d\varphi
+\int_{S^1}\exp\{\beta^{-3/2}\Phi(\varphi)\}
\psi_2(\varphi)\,d\varphi\equiv I_1+I_2.
$$
By Lemma~\ref{l1:v851}, item~3,
$$
\operatorname{Re}\Phi(\varphi) \le\Phi(0)-\delta, \qquad \delta>0,
$$
on the support of the integrand in~$I_2$, while the measure of
the support is of the order of~$\beta^{-1}$.
 Therefore,
\begin{equation}
\label{eq48:v851} |I_2|\le K\exp\biggl\{\frac{\beta^{-3/2}\delta}2
+\beta^{-3/2}\Phi(0)\biggr\},\qquad b\to0,
\end{equation}
where~$K$ is a constant.

Let us now estimate the integral~$I_1$. For convenience, denote
provisionally by~$h=\beta^{3/2}$ the small parameter in the
exponential of our integral. On the interval
$D=[-\varepsilon,\varepsilon]$, we distinguish two subintervals
$D_{1/2}\subset D_{1/3}\subset D$ by setting
\begin{equation}
\label{eq49:v851}
 D_{1/2}=[-\varepsilon h^{1/2},\varepsilon h^{1/2}],\qquad
 D_{1/3}=[-\varepsilon h^{1/3},\varepsilon h^{1/3}].
\end{equation}
Then
$$
\biggl|\frac{\Phi_2}h\biggr|\le C_3\varepsilon^3
\qquad\text{for}\quad \varphi\in D_{1/3},
$$
so that (for a sufficiently small~$\varepsilon$) the imaginary
part of the argument of the exponential~$D_{1/3}$ is small and
the following relation holds:
\begin{equation}
\label{eq50:v851} \operatorname{Re}e^{\Phi(\varphi)/h}
\ge\frac12\mspace{1mu} e^{\Phi_1(\varphi)/h},\qquad \varphi\in
D_{1/3}.
\end{equation}
 Further,
\begin{equation}
\label{eq51:v851} \frac{\Phi(0)}h \ge\frac{\Phi_1(\varphi)}h
\ge\frac{\Phi(0)}h-C_1\varepsilon^2, \qquad \varphi\in D_{1/2}.
\end{equation}
Combining this with the previous inequality and taking into
account the fact that the length of the interval~$D_{1/2}$ is
equal to~$2\varepsilon h^{1/2}$, we obtain
$$
\operatorname{Re}\int_{D_{1/2}}
e^{\Phi(\varphi)/h}\psi_1(\varphi)\,d\varphi \ge
C_4e^{\Phi(0)/h}h^{1/2}.
$$
Further,
$$
\operatorname{Re}\int_{D_{1/3}\setminus D_{1/2}}
e^{\Phi(\varphi)/h}\psi_1(\varphi)\,d\varphi\ge0
$$
by virtue of~\eqref{eq50:v851}. Moreover, the following
inequality holds:
$$
\frac{\Phi_1(\varphi)}h
\le\frac{\Phi_1(0)}h-C_2\varepsilon^2h^{-1/3},\qquad \varphi\in
D\setminus D_{1/3},
$$
so that
$$
\biggl|\int_{D\setminus D_{1/3}}
e^{\Phi(\varphi)/h}\psi_1(\varphi)\,d\varphi\biggr| \le
C_5e^{\Phi(0)/h}e^{-C_2\varepsilon^2h^{-1/3}}.
$$
Combining all the previous estimates, we obtain
$$
I=\operatorname{Re}I \ge
C_6\beta^{3/4}\exp\{\beta^{-3/2}\Phi(0)\}.
$$

It remains to substitute this estimate into
formula~\eqref{eq37:v851} for $\cN_0(M,N)$ and, in view of the
formulas~\eqref{eq36:v851} for the phase function and by the
inequality $\cN(M,N)>\cN_0(M,N)$, we obtain a lower bound for
$\cN(M,N)$. As a result, we obtain Theorem~\ref{t1:v851}. Since
the number $\cN(M,N)$ corresponds to the Lebesgue measure of the
total phase volume and the number $\cN(M,N,\Delta)$ corresponds
to the Lebesgue measure of the phase volume defined in
parentheses in formula~\eqref{eq17:v851}, we obtain the proof of
Theorem~\ref{t2:v851}.

\begin{remark}
The Maxwell distribution  \eqref{eq1:v851}, \eqref{eq2:v851} holds
for a ''classical  ideal gas'' in common understanding. By
definition of pressure $P$  of specific volume $V_{\text{sp}} =
V/N$ for a ''classical  ideal gas'' the compressibility factor
$$
Z= \frac {PV_{\text{sp}}}{kT}
$$
is identically equal  to $1$.
\end{remark}

\section{Clusterization in an ideal gas and dependence of the compressibility
factor on the pressure}

Each scientist who refutes a century old theory runs the risk of
being accused of incompetence and of irritating those scientists
who absorbed the old theory ''with their mother's milk.''  And if
this is a scientist who has achieved a good deal in his area of
knowledge, he also runs the risk of losing his hard-earned
authority. This is borne out by the history of new discoveries in
physics. Thus, the great physicist Boltzmann, virulently attacked
by his contemporaries, committed suicide by throwing himself down
the well of a staircase.

In 1900, Planck proposed his famous formula describing black body
radiation, which gave results coinciding with experiments, but
which he had not rigorously established. The mathematician Bose
from India noticed that, in order to derive the formula, one must
use a new statistic instead of the old one, the so-called
Boltzmann or Gibbs statistic. It is possible that Planck was also
aware of this statistic, but was afraid of being criticized or
did not really believe in his own result. Bose, just like
Boltzmann, was the object of virulent criticism, until Einstein
gave his approval to the proposed statistic, which was also
justified by the philosophical concepts of Ernst Mach. At first,
physicists were bewildered and could not understand the Bose
statistic, because they could not imagine how moving particles
can exchange positions without using up any energy.

These two statistics have been illustrated above by a simple
financial example.

The reply to the bewilderment of physicists was given by Mach's
philosophical conception, claiming that the basic notions of
classical physics (space, time, motion) are subjective in origin,
and the external world is merely the sum of our feelings, and the
goal of science is to describe these feelings. Therefore, if we
are unable to distinguish particles in our subjective perception,
then they are undistinguishable.

I propose a completely different philosophy. We can regard
particles as distinguishable as well as undistinguishable. This
only depends on the aspect of the system of particles that we are
interested in, i.e., depends on the question we are seeking an
answer to. Thus, returning to the money example, people are
interested in the denominations of the bank notes they own, not
in their serial numbers (unless, of course, they believe in
''lucky numbers'').

The situation in physics is similar. Suppose we have a receptacle
filled with gas consisting of numerous moving particles. If we
take a slow snapshot of the gas, the moving particles will
display "tails" whose lengths depend on the velocity of the
particle: the faster the motion, the longer the tail. Using such
a photograph, we can determine the number of particles that move
within a given interval of velocities. And we don't care where
which individual particle is located and which particular
particle has the given velocity.

I have derived formulas which show how the number of particles is
distributed with respect to velocity, for example, they show for
what number (numerical interval) it is most probable to meet a
particle moving with a velocity in that interval.

These formulas lead to a surprising mathematical fact: there
exists a certain maximal number of particles after which the
formulas must be drastically modified. If the number of particles
is much less than this maximal number, the formulas coincide with
the Gibbs distribution up to multiplication by a constant.
Nevertheless, this is essential, because the corrected Gibbs
formula thus obtained no longer leads to the Gibbs paradox.

The paradox now bearing his name was stated by Gibbs in his paper
''On the equilibrium of heterogenous matter,'' published in
several installments in 1876-1879, and resulted in great interest
on the part of physicists, mathematicians, and philosophers. This
problem was studied by H. Poincare, G. Lorentz, J. Van-der-Waals,
V. Nernst, M. Planck, E. Fermi, A. Einstein, J. von Neumann, E.
Schrodinger, I. E. Tamm, P. V. Bridgeman, L. Brillouin, A. Lande
and others, among them nine Nobel Prize laureates.

From my point of view, the solution of the Gibbs paradox can be
obtained once we realize that the Gibbs formula in its classical
form is invalid and we modify it in the way that I have
indicated. This modification was previously interpreted as a
consequence of quantum theory, but this is erroneous from the
mathematical point of view, since the passage from quantum
mechanics to classical mechanics cannot change symmetry and
therefore cannot change the statistics.

In this situation, the following phenomenon, rather strange from
the mathematical point of view, arises. If the number of
particles is greater than the maximal number indicated above,
then the "superfluous" particles, as we already explained, do not
fit into the obtained distribution and the velocity of these
particles turns out to be much less than the mean velocity of
particles in the gas. This effect differs from the Bose-Einstein
condensate phenomenon from quantum theory, because in quantum
theory these particles are at the very lowest energy level, they
have the lowest speed, i.e., roughly speaking, they stop.

Further, I try to give a physical interpretation to the obtained
rigorous mathematical formulas. I interpret the maximal number of
particles mentioned above as oversaturated vapor; the superfluous
particles are then regarded as nuclei around which droplets begin
to grow. As a result, this can explain the so-called phase
transition of the first kind, in which, as the result of the
system achieving equilibrium, the number of particles changes
from that number for an oversaturated gas to that for a saturated
one. Indeed, it is only those particles which move at speeds
greater than the speed of the "superfluous" particles that can be
doubtlessly regarded as particles of the "pure" gas (vapor),
while the others have condensed or have mixed with the condensed
particles (clusters).

In section 1, I cited an example from economics, similar to the
one above, that supports exchangeability theory (instead of the
''independence condition''). In my opinion, we must revise, in
this vein, the ''Gibbs conjecture on thermodynamic equilibrium,''
which is based (see \cite{2:v851}) on the property of independence
leading to the theorem on the multiplication of probabilities. It
is this conjecture that leads to the Gibbs distribution, which is
refuted by the Gibbs paradox, i.e., in essence, by the
mathematical counterexample to this conjecture, as mentioned
above.

It is difficult for physicists to grasp this problem, because it
involves a mathematical effect of the type of Bose condensation,
which results in the appearance of a ''Bose condensate,'' which,
from the author's point of view, has been treated  as some
coagulation of particles with low velocities and the  formation
of dimers, trimers, and other clusters.

The phenomenon of the appearance of dimers is usually obtained by
modeling involving the initial conditions and interactions, for
example, of Lennard-Jones type. According to the author's point
of view, if this phenomenon involves interaction, then it can
occur before the switching-on of an interaction of Lennard-Jones
type: as far as the specific volume is concerned, we still deal
with an ideal gas.  Such type of interaction is observed, for
example, in the gas~$C_{60}$ (fullerene) possessing very weak
attraction (of order $O(1/r^9)$).  It is related to the asymmetry
of the molecules and the types of adjoining faces of the
molecules of~$C_{60}$.

This is much easier to observe experimentally, because fullerene
has no liquid phase and is immediately transformed into fullerite
particles.

The presence of such a ''saturated'' total number of particles in
the problem under consideration, with surplus particles going
somewhere (passing into the Bose condensate
\footnote{The physicists to whom I described this theory warned
me not to use the term ``Bose condensate,'' because this evokes
associations obscuring the understanding of the proposed
theory.}), is a mathematical fact rigorously proved together with
clear estimates of where such aggregates may occur.  However, it
is not quite correct to say that the particles are added.  Indeed,
it is better to say that we lowered the temperature, while using
a piston to maintain a constant pressure, and hence the saturated
total number of particles is decreased.  And we can simply say
that, for a given temperature, the pressure is increased until
the $\lambda$-transition occurs in the ``Bose condensate.'' The
question is: Where have the other particles gone if the
temperature is lowered simultaneously with the pressure
limitation or the pressure at the given temperature becomes
sufficiently large?  Perhaps, they precipitate on the walls of
the vessel?  Such a law of ``necessary'' precipitation
(coagulation) on the walls would be more interesting still and
would have important practical applications.  However,
experiments tend to support, to a greater extent, the first point
of view.  The physicists are even of the opinion that the
transition to dimers is a phase transition.

The most significant fact is that this estimate not improvable.
This fact follows from Theorem~2 in~\cite{11:v851}. In the case
of saturation, it makes it possible to determine the number of
particles passing into clusters as the temperature is lowered,
while a constant pressure is maintained by a piston
(see~\cite{12:v851}).  This also solves the Gibbs paradox.

The use of an unordered sample with replacement leads us to a
mathematical formula for the Bose gas, however, without the
parameter $\hbar$, the Planck constant, but with the same
parameters that appeared when using the parameters of the
Lennard-Jones interaction potential. Instead of
formula~\eqref{eq16:v851}, we thus obtain
\begin{equation}
\label{eq52:v851} \cN_{\Delta v}
=\frac{1}{2{.}612}\int_{v_1}^{v_1+\Delta v} \biggl(\frac{m}{2\pi
kT}\biggr)^{3/2} \frac{1}{e^{mv^2/2kT}-1}4\pi v^2\, dv.
\end{equation}
However, $\Delta v$ now depends on~$v_1$ in the following way: if
$v_1\sim1$, then $\Delta v=N_0^{-1/2+\delta}$, where $\delta>0$,
and $N_0$ is the number of particles saturating the volume~$V$ at
temperature~$T$ and maximal energy~$E/U_0$; namely,
\begin{equation}
\label{eq53:v851}
 N_0=\biggl(\frac{m}{U_0}\biggr)^{3/2}
\int_0^\infty\frac{4\pi u^2\,du } {e^{(m u^2)/2kT}-1}.
\end{equation}

In view of the given parameters, the velocity can be expressed as
$v=\sqrt{U_0/2m}$. Suppose that $v_0$ is the minimal velocity; it
is equal to $v_0=vN^{-1/3+\delta}$, where $\delta>0$ determines
the smallness of~$v_0$. For $\delta=1/3$, we obtain $v_0=v$;
therefore, we set $1/3>\delta>0$. The estimate of the error in
the formula for the distribution~\eqref{eq52:v851} is of the form
$$
O(N^{-1/3-\delta/2}\sqrt{\ln N}\mspace{2mu} (\ln\ln N)^\varepsilon);
$$
namely, the following theorem is valid.

\begin{theorem}
\label{t3:v851}
 The following relation holds:
\begin{align}
\nonumber &\mathsf{P}\Biggl(\cN_{\Delta v}
-\frac{1}{2{.}612}\int_{|v_0|}^{|v_0| +O(N_0^{-1/3+\delta_1})}
\biggl(\frac{m}{2\pi kT}\biggr)^{3/2}
\frac{1}{e^{mv^2/2kT}-1}\,dv_x\,dv_y\,dv_z
\\
\nonumber &\hphantom{\mathsf{P}\Biggl(}\qquad \ge
N^{-1/3-\delta/2}\sqrt{\ln N}\mspace{2mu} (\ln\ln
N)^\varepsilon\Biggr)
\\&\qquad
\le O(N_0^{-k}), \label{eq54:v851}
\end{align}
where $k$ is any integer, $\varepsilon>0$ is arbitrarily small,
$v_0\ge0$, $0<\delta<1/3$, $\delta_1>\delta$, $v=\sqrt{U_0/2m}$,
and $\Delta v=O(N_0^{-1/3+\delta_1}$.
 Here
$\mathsf P$ is is the Lebesgue measure of the phase volume defined
in parentheses in \eqref{eq54:v851} with respect to the total
volume.
\end{theorem}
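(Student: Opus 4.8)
The plan is to repeat the argument of Theorems~\ref{t1:v851}--\ref{t2:v851} with one structural change: the ordered sample with replacement, which attached to each admissible collection $\{N_{jk}\}$ the multinomial weight $N!/\prod_{j,k}N_{jk}!$ and produced the Poisson-type box factors $\exp\{e^{-\beta j-\mu}\}$, is replaced by an \emph{unordered} sample with replacement, in which every admissible collection is counted with weight one and the box generating factor becomes the geometric series $\sum_{n\ge0}e^{-n(\beta j+\mu)}=(1-e^{-\beta j-\mu})^{-1}$. I keep the energy-box scheme \eqref{eq18:v851}--\eqref{eq20:v851} with multiplicities $q_j=[j^{1/2}]$, but now fix $\beta>0$ and $\mu=\ln\xi$ as the solution of the Bose-type saddle equations
\begin{equation*}
N=\sum_{j=0}^{\infty}\frac{q_j}{e^{\beta j+\mu}-1},\qquad M=\sum_{j=0}^{\infty}\frac{jq_j}{e^{\beta j+\mu}-1},
\end{equation*}
which are solvable for $N<N_0$ (with $N_0$ as in \eqref{eq53:v851}); the Euler--Maclaurin formula again gives $\beta\asymp N/M$ and $\Lambda:=\beta^{-3/2}e^{-\mu}\asymp N$ in the regime \eqref{eq21:v851}, while the velocity cut-off $v_0=vN^{-1/3+\delta}$, $v=\sqrt{U_0/2m}$, isolates a lowest energy box below which the comparison in \eqref{eq54:v851} is not performed.

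For the upper bounds I would sum the geometric series in place of the exponential one, obtaining, as in \eqref{eq22:v851},
\begin{equation*}
\cN(M,N)\le e^{\beta M+\mu N}\prod_{j=0}^{\infty}(1-e^{-\beta j-\mu})^{-q_j}=\exp\Bigl\{\beta M+\mu N-\sum_{j=0}^{\infty}q_j\ln(1-e^{-\beta j-\mu})\Bigr\},
\end{equation*}
and then run the Chebyshev-type cut of \eqref{eq23:v851}--\eqref{eq31:v851} with the Bose mean occupation numbers $\overline N_{jk}=(e^{\beta j+\mu}-1)^{-1}$ and an inserted multiplier $e^{\pm c}$; the two-term Taylor expansion of \eqref{eq26:v851} again makes the exponent quadratic in $c$, the only change being that the coefficient to be controlled is $\sum_{j\le l}q_je^{-\beta j-\mu}(1-e^{-\beta j-\mu})^{-2}$, which on the relevant range is $O(\Lambda)$. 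Optimising in $c$ and $\Delta$ as in \eqref{eq29:v851}--\eqref{eq31:v851}, but with the worse choice $\Delta\asymp N^{2/3-\delta/2}\sqrt{\ln N}\,|{\ln\ln N}|^{\varepsilon}$, gives $\cN(M,N,\Delta)\le C_k\,\cN(M,N)\,N_0^{-k}$ for every $k$; the exponent $2/3$ rather than $1/2$ is forced because near the cut-off the mean occupation numbers reach size of order $\beta^{-1}\asymp N^{2/3}$ instead of $O(1)$, so the \emph{relative} error $\Delta/N$ is only $O(N^{-1/3-\delta/2}\sqrt{\ln N}\,|{\ln\ln N}|^{\varepsilon})$, which is exactly the bound asserted in \eqref{eq54:v851}.

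For the matching lower bound I would pass, as in \eqref{eq33:v851}--\eqref{eq37:v851}, to the Kronecker-delta integral representation, which in the unordered case reads
\begin{equation*}
\cN_0(M,N)=\frac{\beta e^{\beta M+\mu N}}{4\pi^2}\int_{-\pi/\beta}^{\pi/\beta}\!\!\int_{-\pi}^{\pi}e^{\Lambda\Phi(\varphi,\psi)}\,d\psi\,d\varphi,
\end{equation*}
with Bose phase function
\begin{equation*}
\Phi(\varphi,\psi)=i\beta^{5/2}e^{\mu}M\varphi+i\beta^{3/2}e^{\mu}N\psi-\beta^{3/2}e^{\mu}\sum_{j=0}^{\infty}q_j\ln\!\bigl(1-e^{-\beta j-\mu-i(\psi+\beta j\varphi)}\bigr),
\end{equation*}
and then prove the analogue of Lemma~\ref{l1:v851}: uniform boundedness of all derivatives under \eqref{eq21:v851}; a stationary point at $(\varphi,\psi)=(0,0)$ modulo the lattice, by the very choice of $\beta,\mu$; a Hessian $\Phi''(0,0)=-\sum_{j}\beta^{3/2}q_je^{-\beta j}(1-e^{-\beta j-\mu})^{-2}\left(\begin{smallmatrix}1&\beta j\\ \beta j&\beta^2 j^2\end{smallmatrix}\right)$ which, after restricting the sum to $j\in[x_1/\beta,x_2/\beta]$ and applying Euler--Maclaurin, is an average of rank-one matrices against a strictly positive weight, hence strictly negative definite uniformly in $\beta,\mu$ by the linear independence of $1$ and $x$; and the isolated-maximum property \eqref{eq38:v851}, since $\operatorname{Re}\Phi(0,0)-\operatorname{Re}\Phi(\varphi,\psi)=\beta^{3/2}e^{\mu}\sum_{j}q_j\bigl(\ln|1-e^{-\beta j-\mu-i(\psi+\beta j\varphi)}|-\ln(1-e^{-\beta j-\mu})\bigr)$ is a sum of nonnegative terms, to which the windowing and the $|\sin(\beta\varphi/2)|^{-1}$ estimate of \eqref{eq44:v851} apply verbatim. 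A saddle-point evaluation with the partition of unity \eqref{eq49:v851}--\eqref{eq51:v851} then yields $\cN(M,N)\ge\cN_0(M,N)\ge c\,\cN(M,N)$ up to a factor polynomial in $N$, and combining this with the $\cM_\Delta$ bound and re-expressing the measures as ratios, precisely as at the end of the proof of Theorem~\ref{t2:v851}, gives \eqref{eq54:v851}; the normalising constant $1/2.612=1/\zeta(3/2)$ and the replacement of $e^{-mv^2/2kT}$ by $(e^{mv^2/2kT}-1)^{-1}$ in the limiting density are the continuum images of passing from $\exp\{e^{-\beta j-\mu}\}$ to $(1-e^{-\beta j-\mu})^{-1}$.

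The main obstacle is this analogue of Lemma~\ref{l1:v851}: one must keep the Hessian uniformly negative definite and the real part of $\Phi$ uniformly peaked at the origin \emph{in spite of} the near-singularity of $\ln(1-e^{-\beta j-\mu-i\theta})$ for small $j$ and small $\mu$ --- which is exactly the condensation regime in which the low boxes become macroscopically occupied. As in the proof of Lemma~\ref{l1:v851}, the cure is to discard all boxes with $j\notin[x_1/\beta,x_2/\beta]$ before estimating, so that $\beta j$ stays bounded away from $0$ and the weights $q_je^{-\beta j}(1-e^{-\beta j-\mu})^{-2}$ are comparable, uniformly in $\mu\ge0$, to those of the Gibbs case; one then checks that the discarded low-$j$ part of $\operatorname{Re}\Phi(0,0)-\operatorname{Re}\Phi$ is genuinely nonnegative (it is, termwise) and that the passage between the error measured in $N$ and the probability bound $O(N_0^{-k})$ is consistent, which holds because $N\le N_0(1+o(1))$ throughout the admissible range. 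The remark following Theorem~\ref{t2:v851} about general Gibbs Hamiltonians applies here as well, so no idea beyond the ordered-sample argument is required.
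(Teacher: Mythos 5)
Your proposal is a faithful and detailed elaboration of the one-sentence proof the paper actually offers for Theorem~\ref{t3:v851} (``the proof \dots is similar to the proof of Theorem~\ref{t2:v851} except that we use unordered samples''), and every ingredient you insert in place of the Boltzmann ones is the right one: the geometric box factor $(1-e^{-\beta j-\mu})^{-1}$, the Bose saddle equations, the log-term phase $\Phi$ and its Hessian with the weight $q_je^{-\beta j}(1-e^{-\beta j-\mu})^{-2}$, and the enlargement of the deviation scale from $\sqrt{N\ln N}$ to $N^{2/3-\delta/2}\sqrt{\ln N}$. However, the detailed argument that the paper actually writes out is for the related condensation estimate \eqref{e8} rather than for \eqref{eq54:v851} itself, and it takes a genuinely different route from yours. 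Instead of pushing the saddle-point/Chebyshev machinery through the near-singularity of $\ln(1-e^{-\beta j-\mu-i\theta})$, the paper splits off the condensate box $U_0$ entirely: it introduces $G(L)$ (placements into $U_1,U_2,\dots$ only, constrained by \eqref{w1}) and $F(x)$ (placements into $U_0$ alone, with $F(x)\sim Cx^{q_0-1}$), writes the convolution $\mathcal{N}=\sum_L G(L)F(N-L)$ as in \eqref{w6}--\eqref{w7}, and then uses the monotonicity of $F$ together with the concentration estimate \eqref{w2}/\eqref{w4} for the auxiliary $U_0$-free problem --- imported wholesale from Theorem~10 of~\cite{MaslNaz_83-2} and Theorem~1 of~\cite{MaslNaz_83-3} --- to get \eqref{w10}--\eqref{w11}. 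What this buys is exactly an escape from the difficulty you flag at the end: by putting all of the $\mu\to0$ pathology into the separately handled factor $F(N-L)$, the paper never has to prove a Bose analogue of Lemma~\ref{l1:v851} with uniformity as $\mu\to0^+$. In your sketch, the claim that $\sum_{j\le l}q_je^{-\beta j-\mu}(1-e^{-\beta j-\mu})^{-2}=O(\Lambda)$ on the relevant range, and the uniform boundedness in item~1 of the analogue of Lemma~\ref{l1:v851}, are exactly the places where the $j=0$ term with $q_0\ge1$ contributes $\sim\mu^{-2}$, and discarding $j\notin[x_1/\beta,x_2/\beta]$ gives you a lower bound on the Hessian but does not control the upper bounds needed for the Chebyshev step \eqref{eq25:v851}--\eqref{eq28:v851}. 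Either restrict from the outset to the unsaturated regime where $\mu$ is bounded below by a fixed power of $\beta$ (you gesture at this with $N\le N_0(1+o(1))$, but it should be made a hypothesis with a quantitative gap), or adopt the paper's convolution split, which is cleaner and is what actually supports the threshold scale $\Delta\asymp N_{cr}^{2/3+\varepsilon}$ of \eqref{e7} and hence the $N^{-1/3-\delta/2}$ rate in \eqref{eq54:v851}.
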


The proof of  Theorem~\ref{t3:v851} is similar to the proof of
Theorem~\ref{t2:v851} except that we use  unordered samples of
''balls''  with replacement.

Suppose that there is a sequence of boxes $U_j$, $j=0,1,2,\dots$,
and each box $U_j$ is divided into $q_j$ compartments. We take $N$
identical balls and put them into the boxes at random observing
the only condition that
\begin{equation}\label{e1}
    \sum_{j=0}^\infty jN_j\le M,
\end{equation}
where $N_j$ is the number of balls in the box $U_j$ and $M$ is a
positive integer specified in advance. As an outcome, we obtain a
sequence of nonnegative integers $N_j$, $j=0,1,2,\dots$, such that
\begin{equation}\label{e2}
    \sum_{j=0}^\infty N_j=N
\end{equation}
and condition \eqref{e1} is satisfied. It is easily seen that,
given $M$ and $N$, there are finitely many such sequences. Suppose
that all allocations of balls to compartments are equiprobable.
Since the number of ways to distribute $N_j$ indistinguishable
balls over $q_j$ compartments is equal to
\begin{equation}\label{a}
    \binom{q_j+N_j-1}{N_j}
    =\frac{\Gamma(q_j+N_j)}{\Gamma(N_j+1)\Gamma(q_j)}
\end{equation}
(where $\Gamma(x)$ is the Euler gamma function), it follows that
each sequence $\{N_j\}$ can be realized in $f(\{N_j\})$ ways,
where
\begin{equation}\label{e3}
  f(\{N_j\})=\prod_{j=0}^\infty
        \frac{\Gamma(q_j+N_j)}{\Gamma(N_j+1)\Gamma(q_j)},
\end{equation}
and the probability of this sequence is equal to $f(\{N_j\})$
divided by the sum of the expressions similar to \eqref{e3} over
all sequences of nonnegative integers satisfying the constraints
\eqref{e1} and \eqref{e2}. This makes the set of all such
sequences a probability space; the corresponding probabilities
will be denoted by $\mathsf{P}(\cdot)$. The numbers $q_j$ are
called the \textit{multiplicities}. We shall assume that $q_0$ is
some positive integer and
\begin{equation}\label{e4}
    q_j=[j^{1/2}], \quad j=1,2,\dotsc,
\end{equation}
where the brackets stand for the integer part of a number.

What happens as $M,N\to\infty$? It turns out that the so-called
\textit{condensation phenomenon} occurs: if $N$ tends to infinity
too rapidly, namely, if $N$ exceeds some threshold
$N_{cr}=N_{cr}(M)$, then a majority of the excessive $N-N_{cr}$
balls end up landing in the box $U_0$; more precisely, with
probability asymptotically equal to $1$, the number of balls in
$U_0$ is close to $N-N_{cr}$ (and accordingly, the total number of
balls in all the other boxes is close to $N_{cr}$, now matter how
large $N$ itself is). Let us give  the scheme of proof analogous
to the proof of  Theorem~\ref{t2:v851}

Define $N_{cr}=N_{cr}(M)$ by the formula
\begin{equation}\label{e5}
    N_{cr}=\sum_{j=1}^\infty\frac{q_j}{e^{\beta j}-1},
\end{equation}
where $b$ is the unique positive root of the equation
\begin{equation}\label{e6}
    \sum_{j=1}^\infty\frac{jq_j}{e^{\beta j}-1}=M.
\end{equation}
Next, let
\begin{equation}\label{e7}
\Delta=N_{cr}^{2/3+\varepsilon},
\end{equation}
where $\varepsilon>0$ is arbitrarily small \textup(but
fixed\textup). If $N>N_{cr}$, then there exist constants $C_m$
such that
\begin{equation}\label{e8}
    \mathsf{P}(\vert N_0-(N-N_{cr})\vert>\Delta)\le C_mN_{cr}^{-m},
    \qquad m=1,2,\dotsc\,.
\end{equation}

It is not hard to compute $N_{cr}(M)$. Indeed, in view of
\eqref{e4}, the Euler--Maclaurin formula gives
\begin{equation}\label{e9}
    \sum_{j=1}^\infty\frac{jq_j}{e^{\beta j}-1}\sim
    \beta^{-5/2}\int_0^\infty
    \frac{x^{3/2}\,dx}{e^x-1}=\beta^{-5/2}\Gamma(\frac 52)\zeta(\frac 52)
\end{equation}
(where $\zeta(x)$ is the Euler zeta function) and likewise,
\begin{equation}\label{e10}
    \sum_{j=1}^\infty\frac{q_j}{e^{\beta j}-1}
    \sim \beta^{-3/2}\int_0^\infty
    \frac{x^{1/2}\,dx}{e^x-1}=\beta^{-3/2}\Gamma(\frac 12)\zeta(\frac 12).
\end{equation}
By substituting this into \eqref{e5} and \eqref{e6}, we obtain
\begin{equation}\label{e11}
    N_{cr}\sim \frac{M^{3/5}\Gamma(\frac 32)\zeta(\frac 32)}
    {(\Gamma(\frac 52)\zeta(\frac 52))^{3/5}}.
\end{equation}

In contrast to the Maxwell distribution,
the compressibility factor for the given distribution equals
\begin{equation}\label{z1}
Z=\frac{PV}{kTN_0}= \frac 23 \frac{\int\frac{(p^2/2m) p^2
dp}{e^{p^2/2mkT}-1}}{kT \int\frac{p^2 dp}{e^{p^2/2mkT}-1}}= 0.523.
\end{equation}
However, if  the number of particles $N \ll N_0$, then
\begin{equation}\label{z2}
Z=\frac{PV}{N}= \frac 23\frac{\int\frac{(p^2/2m) p^2
dp}{e^{(p^2/2m -\mu)/ kT}-1}}{kT \int\frac{p^2
dp}{e^{(p^2/2m-\mu)/kT}-1}}
\end{equation}
i.e., there appears a negative parameter $\mu$ which tends to $-\infty$
(and $Z\to1$) as $N/N_0$ decreases to zero.

Consider the gas which consists of $K$  different molecules, or
dimers, trimers, \dots,  $k$-mers.

Now suppose that the situation is the same, but we should
additionally paint each of the $N$ balls at random into one of $K$
distinct colors. Now that we can distinguish between balls of
different colors but balls of a same color are indistinguishable,
how does this affect the probabilities?

Instead of immediately painting the balls, we can further divide
each of the $q_j$ compartments in the $j$th box into $K$
sub-compartments and put the uncolored balls there (with the
understanding that the balls in the $k$th sub-compartment will
then be painted into the $k$th color and the dividing walls
between the sub-compartments will be removed). Now we have $Kq_j$
sub-compartments in the $j$th box, so that there are
\begin{equation}\label{b}
    \binom{Kq_j+N_j-1}{N_j}
    =\frac{\Gamma(Kq_j+N_j)}{\Gamma(N_j+1)\Gamma(Kq_j)}
\end{equation}
ways to put $N_j$ balls into the $j$th box. All in all, the
introduction of $K$ colors has the only effect that all
multiplicities $q_j$ are multiplied by $K$.

Our theorem applies in the new situation (with $q_j$ replaced by
the new multiplicities $\widetilde q_j=Kq_j$). The computation of
the new threshold $\widetilde N_{cr}$ mimics that of $N_{cr}$,
with the factor $K$ taken into account:
\begin{align*}
    \sum_{j=1}^\infty\frac{jKq_j}{e^{\beta j}-1}&\sim
    K\beta^{-5/2}\int_0^\infty
    \frac{x^{3/2}\,dx}{e^x-1}=K\beta^{-3/2}\Gamma(\frac 52)\zeta(\frac 52),\\
    \sum_{j=1}^\infty\frac{Kq_j}{e^{\beta j}-1}
    &\sim K\beta^{-3/2}\int_0^\infty
    \frac{x^{1/2}\,dx}{e^x-1}=K\beta^{-3/2}\Gamma(\frac 32)\zeta(\frac 32),\\
    \widetilde N_{cr}&\sim K\frac{M^{3/5}\Gamma(\frac 32)\zeta(\frac 32)}
    {(K\Gamma(\frac 52)\zeta(\frac 52))^{3/5}}=
    K^{2/5}N_{cr},
\end{align*}
where $N\ge N_{cr}+\Delta$.

Consider the following auxiliary problem: we wish to put some
balls into the boxes $U_j$, $j=1,2,\dotsc$, of multiplicities
$q_j$, leaving the box $U_0$ aside. The overall number of balls
is not specified in advance, and we should only observe the
condition
\begin{equation}\label{w1}
    \sum_{j=1}^\infty jN_j\le M.
\end{equation}
Theorem 10 in \cite{MaslNaz_83-2} and Theorem 1 in
\cite{MaslNaz_83-3} claim that in this problem the sum of all
$N_j$ is in most cases close to $N_{cr}$. More precisely, one has
the estimate
\begin{equation}\label{w2}
    \mathsf{P}\biggl(\biggl|N_{cr}
    -\sum_{j=1}^\infty N_j\biggr|>\Delta\biggr)
    \le C_mN_{cr}^{-m}
\end{equation}
with some constants $C_m$, $m=1,2,\dotsc$.

Let $G(L)$ be the number of ways to put exactly $L$ balls into the
boxes $U_j$, $j=1,2,\dots$, so that condition~\eqref{w1} is
satisfied. Note that $G(L)=0$ for $L>M$, because
\begin{equation}\label{w3}
    \sum_{j=1}^\infty jN_j\ge \sum_{j=1}^\infty N_j=L.
\end{equation}
Then the estimate \eqref{w2} can be rewritten as
\begin{equation}\label{w4}
    \frac{\sum_{|\alpha-N_{cr}|>\Delta} G(L)}{\sum_L G(L)}
    \le C_mN_{cr}^{-m}.
\end{equation}
Let $\mathcal{N}$ be the total number of ways to put $N$ balls
into the boxes $U_0,U_1,\dotsc$ with condition \eqref{e1} being
satisfied, and let $\mathcal{N}(\Delta)$ be the number of only
those ways for which, in addition,
\begin{equation}\label{w5}
    \vert N_0-(N-N_{cr})\vert>\Delta.
\end{equation}
One obviously has
\begin{equation}\label{w6}
    \mathcal{N}=\sum_{L=0}^{N'}G(L)F(N-L),
\end{equation}
where $F(x)$ is the number of ways to put $x$ balls into the box
$U_0$ of multiplicity $q_0$ and $N'=\min\{N,M\}$. In a similar
way,
\begin{equation}\label{w7}
    \mathcal{N}(\Delta)
    =\sum_{\substack{0\le L\le N'\cr
    |L-N_{cr}|>\Delta}}G(L)F(N-L).
\end{equation}
Note that $F(x)$ is a monotone increasing function. Hence we can
estimate
\begin{equation}\label{w8}
    \mathcal{N}\ge F(N-N')\sum_{L=0}^{N'}G(L)
    \ge\frac12 F(N-N')\sum_{L=0}^{M}G(L).
\end{equation}
(The last inequality follows from \eqref{w4} and (ii).) Next,
\begin{equation}\label{w9}
    \mathcal{N}(\Delta)\le
    F(N)\sum_{\substack{0\le L\le N'\cr
    |L-N_{cr}|>\Delta}}G(L)
    \le F(N)\sum_{|L-N_{cr}|>\Delta}G(L).
\end{equation}
By dividing \eqref{w9} by \eqref{w8}, we obtain
\begin{equation}\label{w10}
    \frac{\mathcal{N}(\Delta)}{\mathcal{N}}\le
    2\frac{F(N)}{F(N-N')}\frac{\sum_{|L-N_{cr}|>\Delta} G(L)}{\sum_L G(L)}
    \le 2C_mN_{cr}^{-m}\frac{F(N)}{F(N-N')}
\end{equation}
in view of \eqref{w4}. It remains to note that $F(x)\sim
Cx^{q_0-1}$ with some constant $C>0$, and hence
\begin{equation}\label{w11}
   \frac{F(N)}{F(N-N')}\le C_0\biggl(\frac{N}{N-N'}\biggr)^{1/2}
   \le C_1 M^{1/2}\le  {C_2}N_{cr}^{5/6}.
\end{equation}
By substituting this into \eqref{w10}, we obtain the desired
estimate. The proof of the proposition is complete.

At $K > 1$ the chemical potential $\mu$ in \eqref{z2} is strictly
less than zero, hence the compressibility factor will be greater
than the value of \eqref{z1}.

First, consider the graphs in Figs.~\ref{fig1} and ~\ref{fig2}
for argon.

If the vapor is saturated, then, at low temperatures, the number
of clusters (dimers, trimers) is, as a rule, large. This decreases
the total number of particles in the volume and increases the
chemical potential, and hence the compressibility factor $Z=
PV_{\text{sp}}/kT$, $P$ - is the pressure, $V_{\text{sp}}$ is the
specific volume, is increased. As the temperature increases, the
number of clusters decreases and, at a certain temperature, the
fraction of dimers becomes less than 7\% (the Calo criterion).
Then the compressibility can drop to 0.53.

But since the saturated gas is in equilibrium with the liquid,
the dimension can then decrease rather steeply and the
compressibility  factor (e.g., for argon) can decrease down to
0.25. It means that as the pressure increases, interaction takes
effect.

\begin{figure}
\includegraphics{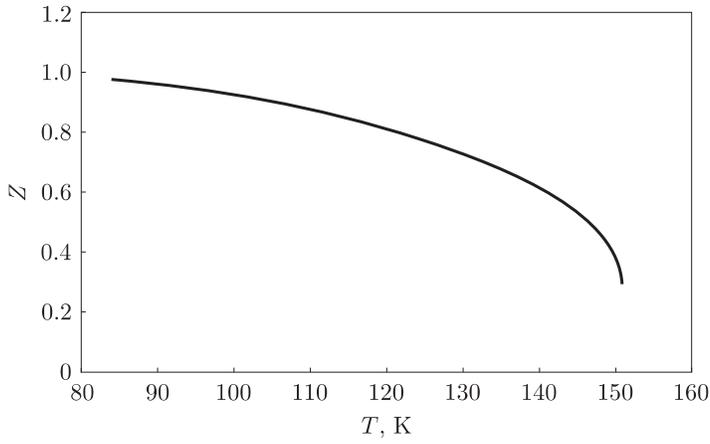}
\caption{Thermodynamic properties of saturated argon. $Z$ is the
compressibility factor, $Z=PV/kT$; \  $T$   is the temperature in
Kelvin degrees.} \label{fig1}
\end{figure}

\begin{figure}
\includegraphics{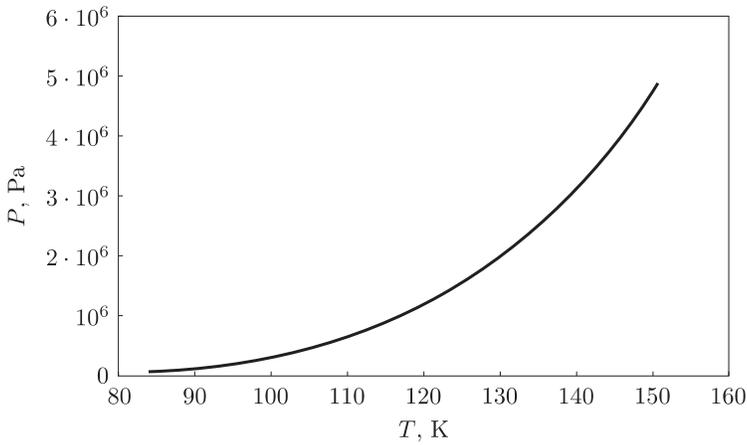}
\caption{Thermodynamic properties of saturated argon. P is the
pressure in pascals, T  is the temperature in Kelvin degrees.}
\label{fig2}
\end{figure}

Thus, the formation of nanostructures in the other phase (the
liquid one) plays a significant role, just as the formation of
clusters in a gas.

Let us now pass to the case of a constant temperature
(Fig.~\ref{fig3})\cite{Enciclop}.
\begin{figure}
\includegraphics{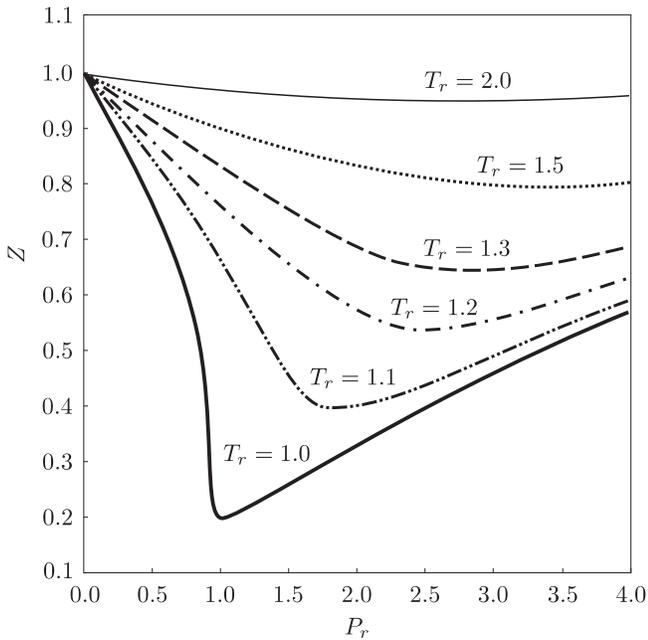}
\caption{ $T_r = T/T_c$, and $P_r = P/P_c$ are reduced temperature
and pressure, respectively.} \label{fig3}
\end{figure}

We can assume that, instead of $E$, $V^{2/3}$ tends to infinity.
And hence, in all the formulas of Bose-Einstein type from [2],
[3], we can assume that $b^{-1}\simeq \frac{V^{2/3}}{kT}$, where
$T$ is the temperature, $k$ is the Boltzmann constant, and $d$ is
the dimension. For the number $b$ in \cite{Masl-MZ-83-5,
Masl-MZ-83-6} to be dimensionless, let us introduce the effective
radius $a$ of the gas molecule (see below).

\section{Taking into account the pair interactions between particles}

Now consider the Hougen--Watson diagram given in Brushtein's
textbook ``Molecular Physics''~\cite{30:v851}. The diagram
reflects the dependence of the compressibility factor $Z=PV/kTN$
on the pressure for different temperatures and was constructed by
Hougen and Watson for seven gases: H$_2$, N$_2$, CO, NH$_3$,
CH$_4$, C$_3$H$_8$, C$_5$H$_{12}$. Although the textbook states
that attraction decreases compressibility, this, however, is
obtained for Van-der-Waals gas under the condition $|1-Z|\ll1$,
i.e., time as a compressibility factor decreases down to~$0{.}2$.

From our understanding of ideal gas, it follows from the
distribution~\eqref{eq52:v851} that $Z$ can attain the value
of~$0{.}523$ and, further, the compressibility factor must
decrease only at the expense of the interaction.

Phenomenological thermodynamics is based on the concept of pair
interaction.  Moreover, it is implicitly assumed that

\textit{there exists some one-particle distribution
characterizing the field, to which all the particles contribute}.

\begin{figure}
\includegraphics{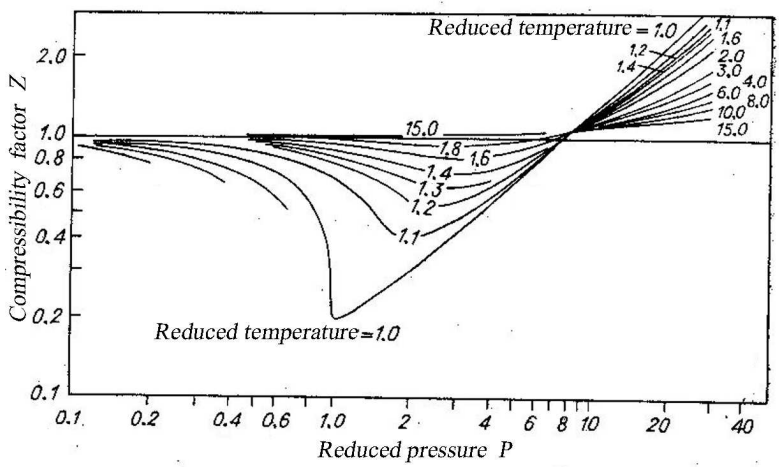}
\caption{The original Hougen-Watson diagram.} \label{fig4}
\end{figure}

They are interrelated. Formulas for the distribution corresponding
to this mean field were rigorously obtained by the author
in~\cite{13:v851}. The equation that relates the potential of the
mean field to pair interactions is called the equation of
self-consistent (or mean) field.  For the interaction
potential~$\Phi$, it is of the form
\begin{equation}
\label{eq55:v851} u(x,p)=u_0(x,p)+N\int\Phi(x-x',p-p')
\frac{1}{e^{(p^2/2m+u(x',p'))(kT)^{-1}}-1}\,dx'\,dp'.
\end{equation}
This equation was rigorously justified only in the case of
long-range interaction, in particular, in~\cite{14:v851}.

In the case of a gas occupying the volume~$V$ and not subject to
the action of external forces, $u_0(x,p)=0$ for~$x$ lying inside
the volume~$V$ and $u_0(x,p)=\infty$ on the boundary of this
volume.

In what follows, we shall study only this case. For the
interaction potential we take the Lennard-Jones potential or, for
a greater coincidence with the experiment, the following
potential:
\begin{equation}
\label{eq56:v851} \Phi(r)=\frac{\varepsilon n}{n-6} \biggl(\frac
n6\biggr)^{6/(n-6)} \biggl(\frac{\sigma^n}{r^n}-
\frac{\sigma^6}{r^6}\biggr)
\end{equation}
containing one more parameter $n>6$.

In the zeroth approximation, as $\sigma\to 0$, the integral of
the potential \eqref{eq55:v851} with respect to $r$ from some
$r_0$ to $\infty$. This integral substantially depends on $r_0$.
How must we choose $r_0$?

Consider the scattering of one particle by another. Suppose that,
as $t \to -\infty$, the velocities of the two colliding particles
are equal to ${\bf{v}}_1^{in}$ and ${\bf{v}}_2^{in}$,
respectively. This means that, at $t \to -\infty$  the
trajectories of the particles approach straight lines. In terms
of the variable $r=r_2-r_1$ as $t \to -\infty$  the radius vector
of the $r$-point asymptotically approaches the function
$r^{in}=\rho+{\bf{v}}^{in}t$, where $\rho{\bf{v}}^{in}=0$ and
${\bf{v}}^{in}= {\bf{v}}^{in}_2-{\bf{v}}^{in}_1$. The constant
vector $\rho$ is referred to as the target parameter. The
quantity $\rho$ is equal to the distance between the straight
lines along which the particles would move if no interaction was
present. After the collision as $t \to \infty$, the velocities of
the particles are equal to ${\bf{v}}_1^{out}$ and
${\bf{v}}_2^{out}$. This means that the radius vector $r(t)$
asymptotically approaches the function $r^{out} = c +
{\bf{v}}^{out}t$. The trajectories $r^{in}(t)$ and $r^{out}(t)$,
which are straight lines, are said to be the incoming and
outgoing asymptotes. The value of the relative speed in the
\textit{in}- and \textit{out}-states is preserved; namely,
$|{\bf{v}}^{in}|=|{\bf{v}}^{out}|=v.$

The condition on the turning point $r_0$ is of the form
\begin{equation}\label{r0}
\left[\frac{v^2}{4}- \frac{(\rho v)^2}{2r^2}-\Phi(r)\right]=0.
\end{equation}
Let is find the value of $\Phi(r)$ for the potential
\eqref{eq55:v851} (this value depends on $r_0$),
\begin{eqnarray}
&& \widetilde{\Phi}(0) = \int^\infty_{r_0} \Phi(|r|) dr=
4\varepsilon\int_{r_0}^\infty \left(
\frac{\sigma^{12}}{r^{12}}-\frac{\sigma^6}{r^6}\right) dxdydz =
\frac {16}{3} \pi\varepsilon \sigma^3\int^\infty_{r/\sigma}
\left(\frac{1}{\xi^{10}} - \frac{1}{\xi^4}\right) d\xi = \nn \\
&& = \frac{16\pi}{3}\varepsilon \sigma^3\left\{-\frac
19\left(\frac{\sigma}{r_0}\right)^9+\frac
13\left(\frac{\sigma}{r_0}\right)^3\right\} =
\frac{16}{3}\pi\varepsilon \sigma^3\left(-\frac
13\frac{\sigma^3}{r^3_0}+ \frac 19\frac{\sigma^9}{r_0^9}\right),
\label{phi}
\end{eqnarray}
where $r/a=\xi$.

In particular, for $r_0 = \sigma$, the area is equal to
$-(8/9)\varepsilon \sigma^3\alpha$, where $\alpha = (4/3)p$. For
$r_0=\frac{\sigma}{\sqrt[6]{3}}$, the area is equal to zero.

In the nanotube, the target parameter $\rho$ can be assumed to be
zero. In this case,
$$
\xi^{12} - \xi^6 = \frac{p^2}{4\varepsilon m}; \quad \xi=\frac
{\sigma}{r_0}
$$
$$
\xi^6=x
$$
$$
x^2-x-\frac{p^2}{4\varepsilon m}=0
$$
$$
x_{1,2}= \frac 12+\sqrt{\frac 14+\frac{p^2}{4\varepsilon m}} =
\frac 12 + \frac 12\sqrt{1+\frac{p^2}{\varepsilon m}}.
$$
\begin{equation}\label{a1}
\frac{\sigma}{r_0}= \left[\frac
12\left(1+\sqrt{1+\frac{p^2}{\varepsilon m}}\right)\right]^{1/6}
\end{equation}
for $p=0, r_0=\sigma$.

The Lennard-Jones potential can now be represented in the form
$\Phi(r_0, r)=\Phi(r_0(p),r)$, and hence, since $|p|=|p_i-p_j|$,
it follows that the equation for the dressed potential looks as
follows:
\begin{equation}\label{a3}
u(p, x) =N\int \Phi(r_0(p-\eta), |x-\xi|)\frac{dp
d\xi}{e^{\frac{b\eta^2}{2m}+u(\eta,\xi)}-1}, \quad p\in R^3, x\in
R^3.
\end{equation}
Since the external potential is absent and the distribution
depends on $x$ in terms of the dressed potential only, we can
assume that $u(p, x) = u(p)$ does not depend on $x$.

Making the change $x-\xi=y$ and integrating with respect to $y$
from $r_0$ to $\infty$, we obtain
\begin{equation}\label{a4}
u(p)= \frac{16}{3}\pi\varepsilon \sigma^3 \int  \frac{\frac
19\left[\frac 12+ \frac 12\sqrt{1+\frac{(p-\eta)^2}{\varepsilon
m}}\right]^{3/2}-\frac
13\left[\frac12+\frac12\sqrt{1+\frac{(p-\eta)^2}{\varepsilon
m}}\right]^{1/2}}{e^{b\{\frac{\eta^2}{2m}- \mu+u(\eta)\}}-1}
d\eta.
\end{equation}
Here $\mu\leq 0$ stands for the chemical potential.

It should be noted that the probability of the event in which the
particle $x_2$ occurs on the interval from $r_0$ to infinity is
not constant. It is obviously proportional to the time during
which the particle is kept within the interval ($x_2', x_2''$),
and this time is inversely proportional to the speed of $x_2$ with
respect to the particle $x_1$. One can readily see that this
probability is equal to
\begin{equation}\label{eq11}
F(x_2)= \frac{\left[\frac 1m (p_1-p_2)^2 -
\frac{\rho^2(p_2-p_1)^2}{m(x_2-x_1)^2}
-\Phi(x_2-x_1)\right]^{-1/2} - (\frac 1m(p_1 - p_2)^2)^{-1/2}}
{\int_{r_0}^\infty \left\{\left[\frac 1m (p_1-p_2)^2 -
\frac{\rho^2(p_2-p_1)^2}{m(x_2-x_1)^2}
-\Phi(x_2-x_1)\right]^{-1/2} - (\frac 1m(p_1 - p_2)^2)^{-1/2}
\right\}dr}.
\end{equation}
Since the integral of the expression
$$
\frac{F(x_2)}{e^{b\{\frac{p_2^2}{m}+u(p_2,
\rho)-\mu\}}-1}\left[\int \left(e^{b\{\frac{p^2_2}{m}+u(p,
\rho)-\mu\}}-1\right)^{-1} |p_2|^{2} dp_2\right]^{-1},
$$
(first with respect to $x_2$ and then with respect to $dp$) is
equal to one, it follows that the probabilities are independent,
and the distribution with respect to $x_2$ and $p_2$ is equal to
the product of the distributions with respect to $x_2$ and to
$p_2$.

Let us find the energy level below which the ``condensate''
appears.

As is well known, the ``turning point''~$r_0$, the energy
$E=m(v_1-v_2)^2$, where $v_1$ and $v_2$ are the velocities of two
interacting particles, and the impact parameter~$\rho$ are
related by
\begin{equation}
\label{eq63:v851}
 E-\frac{E\rho^2}{r_0^2}-\Phi(r_0)=0.
\end{equation}

\smallskip

1. The potential has the form $-\alpha/r^4$. Then the turning
point~$r_0$ in the scattering problem is defined by the relation
$$
1-\frac{\rho^2}{r^2_0}+\frac\alpha{r^4E}=0,
$$
where $E=(p_1-p_2)^2/m$ is the energy of the particles an
infinite distance apart and $\rho$ is the impact parameter. Hence
$$
r_0=\frac{\rho^2}2 +\sqrt{\frac{\rho^4}4-\frac\alpha
E}\mspace{2mu},
$$
and the solution is only possible if~$E$ is bounded below:
$E\ge4\alpha/\rho^4$.

For $E=4\alpha/b^4$, we have the expression
$$
1-\frac{b^2}{r^2} +\frac\alpha{r^4E}
=\biggl(1-\frac{\rho^2}{2r}\biggr)^2>0
$$
where $r>\rho^2/2$.

\smallskip

2. Suppose that the attraction potential is of the form
$$
\Phi(r)=-4U_0\mspace{1mu}\frac{\sigma^6}{r^6}\,, \qquad
 E_{\min}
=\min E=4U_0\sigma^6\max_{r_0<\rho}
\frac1{r_0^6}\biggl(\frac{\rho^2}{r_0^2}-1\biggr)^{-1}
=27U_0\mspace{1mu}\frac{\sigma^6}{\rho^6}\,.
$$

Then we obtain
$$
\frac{\Phi(r)}{E_{\min}} =-\frac4{27}\cdot\frac{\rho^6}{r^6}\,.
$$
Note that, for $E<E_{\min}$, if there is no term corresponding to
repulsion, then the ``falling on the center'' phenomenon occurs,
i.e., the binding-together of the particles.

\smallskip

3. For the function
$$
\Phi=4U_0\biggl(\frac{\sigma^6}{r^6}
-\frac{\sigma^{12}}{r^{12}}\biggr)
\biggl(\frac{\rho^2}{r^2}-1\biggr)^{-1},
$$
the equation for extremum points is of the form
$$
\Phi'(x)=8\varepsilon \sigma^6\mspace{1mu}
\frac{3x^8-2\rho^2x^6-6\sigma^6x^2+5\sigma^6\rho^2}{x^{11}
(-\rho+x)^2(x+\rho)^2}=0\qquad x=r_0.
$$

The graph of~$\Phi(r_0)$ for the given impact parameter
$\rho=2\sigma$ is shown in Fig.~\ref{fig5}.

\begin{figure}
\includegraphics{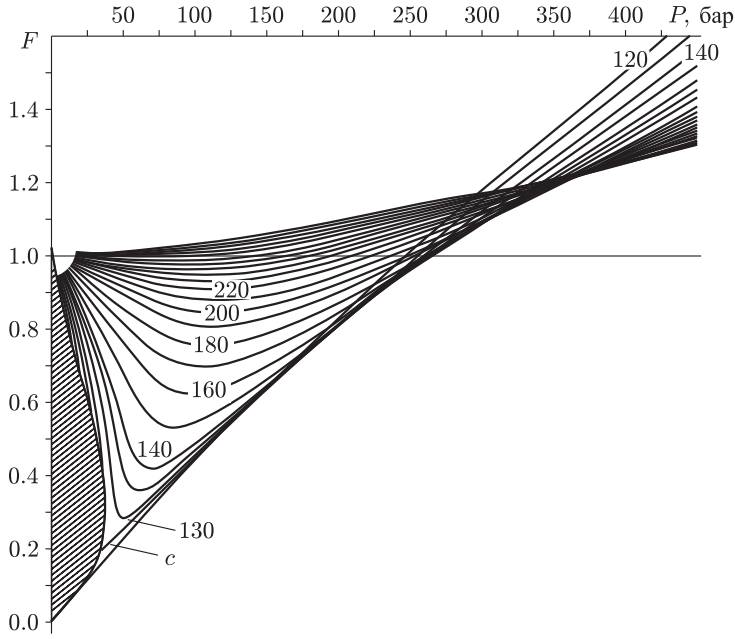}
\caption{The Hougen-Watson diagram for nitrogen; $c$ is the
critical temperature.} \label{fig5}
\end{figure}

For $E$ less than some value of~$E_0$, there appears a barrier
whose the depth is $E_{\max}^{\text{loc}}-E_{\min}^{\text{loc}}$
and for which the probabilities of penetration of particles into
the well~\cite{31:v851},~\cite{32:v851} due to thermal noise at a
given temperature are well known. This implies that, below this
barrier, the distributions that were obtained earlier for an
ideal gas are false, because there exists a probability of
penetration through the barrier at a given temperature. Thus, the
maximum of the barrier is the natural minimum for the energy
$(p_1-p_2)^2/m$, below which we cannot use the distribution given
above.

In~\cite{31:v851},~\cite{32:v851}, it was calculated how much
time a particle stays in a well of height~$h$ and depth~$\delta$
at a given temperature (thermal noise). If the number of
particles $N$ tends to infinity, then this time is proportional
to the number of particles occupying the well. It is a strange
fact that this is in agreement with our estimates~\cite{8:v851}
of penetration into the condensate and yields the value
$E_{\min}(\rho)$ for the condensate; for a given impact
parameter~$\rho$, this value corresponds to the same turning
point $r_0\cong1{.}21$ for the Lennard-Jones potential
with~$n=12$, to $r_0=1{.}02$ with~$n=18$, and to $r_0=1{.}28$
with~$n=7$.

But since~$\rho^3$ for an ideal gas is of the order of
$2V_{\text{sp}}\gg\rho^3$, then, accordingly, we can also define
$E_{\min}$ from Eq.~\eqref{eq63:v851} for prescribed values
of~$r_0$ and $\rho=\sqrt[3]{2V_{\text{sp}}}$; $E_{\min}$
corresponds to the gases for which the interaction between
particles is best described by the Lennard-Jones potential with a
given~$n$. If this is known, then $E_{\min}$ can be determined in
terms of~$V_{\text{sp}}$.

Using $\cP_\rho((p_1-p_2)^2/m)$, we define the one-dimensional
distribution of the difference of the momenta in the scattering
problem~\cite{11:v851},~\cite{33:v851}
\begin{equation}
\label{eq64:v851}  \cP_\rho\biggl(\frac{(p_1-p_2)^2}{m}\biggr)
=\frac{1}{e^{(p_1-p_2)^2/mkT-\mu'/kT}-1}
\biggl(\int_{p_{\min}}^\infty
\frac{d\xi}{e^{\xi^2/mkT-\mu'/kT}-1}\biggr)^{-1}.
\end{equation}

Here $\rho_{\min}=\sqrt{mE_{\min}}$, and, as pointed out above,
$E_{\min}$ is defined for the scattering problem by an
interaction in the form of the Lennard-Jones potential.

The distribution~\eqref{eq55:v851} contains the chemical
potential~$\mu_1$ which is related to the chemical
potential~$\mu$ for the distribution with a ``dressed'' potential
by relation~\eqref{eq65:v851} below, which expresses the fact
that, for a fixed scattering parameter~$\rho$, the number of
particles in the one-dimensional scattering problem is of the
order of $\sqrt[3]{N/2}$, where $N$ is the total number of
particles \emph{outside the condensate}.

The dressed potential~$u$ depends on the three-dimensional
momentum~$p$ and is independent of the coordinates under the
reduction to the scattering problem~\cite{13:v851}. Therefore,
the chemical potential~$\mu_1$ is connected to the chemical
potential $\mu=\mu(\rho)$ of the problem on the distribution with
a dressed potential by the relation
\begin{equation}
\label{eq65:v851} \frac23\mspace{1mu}\pi \int\frac{p^2\,dp}
{e^{(p^2/2m+u(p,\rho)-\mu)/kT}-1} =\biggl(\int_{p_{\min}}^\infty
\frac{dp}{e^{p^2/mkT-\mu_1/kT}-1}\biggr)^3.
\end{equation}

If $u(p)$ is positive, then $p_{\min}=0$, and hence $\mu_1<0$.

In the integral equation of the mean field~\eqref{eq55:v851}, we
can drop the external potential~$u_0$, because it is zero inside
the volume, and finally obtain~\cite{13:v851}
\begin{align}
\nonumber
u(p,\rho)+C(\mu,\rho)&=\frac89\mspace{1mu}\pi^2\mspace{1mu} \frac
NV\int_0^{\infty}\biggl[\int^{\infty}_{r_0(1/E,\rho)}
\frac{\Phi(r)r^2\,dr}{\sqrt{1-\rho^2/r^2-\Phi(r)/E}}
\\*
\nonumber &\hphantom{{}=\frac89\mspace{1mu}\pi^2\mspace{1mu}
\frac NV\int_0^{\infty}\biggl[}\quad\quad
\times\biggl[\int^{\infty}_{r_0}
\biggl\{\biggl(\sqrt{1-\frac{\rho^2}{r^2}
-\frac{\Phi(r)}E}\biggr)^{-1}-1\biggr\}\,dr\biggr]^{-1}
\\
\nonumber &\hphantom{{}=\frac89\mspace{1mu}\pi^2\mspace{1mu}
\frac NV\int_0^{\infty}\biggl[}\quad\quad
\times\frac{1}{e^{((p')^2/2m+u(p',\rho)-\mu)/kT}-1}
\\
\nonumber &\hphantom{{}=\frac89\mspace{1mu}\pi^2\mspace{1mu}
\frac NV\int_0^{\infty}\biggl[}\quad\quad \times
\frac{1}{e^{((p-p')^2/m-\mu_1)/kT}-1}(p')^2\,dp'\biggr]
\\&\qquad\qquad \times\biggl\{\int^\infty_{p_{\min}}
\frac{dp}{e^{(p^2/m-\mu_1)/kT}-1}\biggr\}^{-4}, \label{eq66:v851}
\end{align}
where $V$ is the volume, $r_0=r_0(\rho,E)$, and $E=(p-p')^2/m$.

As presented in~\cite{13:v851},
\begin{equation}
\label{eq67:v851}
 Z=\frac{2}{3kTV_{\text{sp}}^{2/3}}
\int_0^{V_{\text{sp}}^{1/3}}\rho\,d\rho\,
\int\biggl\{\frac{p^2}{2m}+u(p,
\rho)\biggr\}\cdot\frac{p^2}{2m}\frac{p^2dp}{e^{p^2/2m+u(p,\rho)-\mu/kT}-1}
\cdot\biggl\{\int\frac{p^2dp}{e^{p^2/2m+u(p,\rho)-\mu/kT}-1}
\biggr\}^{-1}.
\end{equation}

When the scattering problem is considered in the whole space,
then the distribution over the scattering section is uniform. But
we restrict the problem by the volume~$V_{\text{sp}}$. Then there
is no uniformity due to the boundary, at least, outside the
domain, where
$$
\frac{dZ}{dV_{\text{sp}}}<\frac Z{V_{\text{sp}}}\,.
$$
Under different assumptions, \footnote{For example,
A.~M.~Chebotarev proposed the following distribution of the
impact parameter: $\mathsf P(\rho\le r) = (1-r^2)^{3/2}$} the
distribution~$\mathcal P(\rho)$ over~$\rho$ can be different
(see~\cite{34:v851} (the Bertrand paradox),
\cite{35:v851},~\cite{36:v851}). Moreover, $Z$ is described by an
expression of type~\eqref{eq67:v851}, averaged with respect to
the distribution $\mathcal P(\rho)$ of the lines $\rho$ apart in
the ball of radius $\sqrt[3]{V_{\text{sp}}}$.

\begin{equation}
\label{eq67a:v851}
 Z=\frac{2}{3kTV_{\text{sp}}^{2/3}}
\int_0^{V_{\text{sp}}^{1/3}}{\mathcal P(\rho)}\rho\,d\rho\,
\int\biggl\{\frac{p^2}{2m}+u(p, \rho)\biggr\}\cdot
\frac{p^2dp}{e^{p^2/2m+u(p,\rho)-\mu/kT}-1}
\cdot\biggl\{\int\frac{p^2dp}{e^{p^2/2m+u(p,\rho)-\mu/kT}-1}
\biggr\}^{-1},
\end{equation}
where $\mu=\mu(\rho)$. Therefore, $Z$ can be taken at some mean
point $\rho_{\text{mean}}(T,V_{\text{sp}})$. Then
$$
\text{ $\frac{dZ}{dV_{\text{sp}}}
=\frac{dZ}{d\rho_{\text{mean}}}\cdot
\frac{d\rho_{\text{mean}}}{dV_{\text{sp}}} =-\infty$, }
$$
and hence,
$$
\text{ $\frac{dZ}{d\rho_{\text{mean}}}=-\infty$. }
$$

If, at this point, the asymptotics as $p\to 0$ of
$u(p,\rho_{\text{mean}})$ is of the form
$-p^2/2m+\alpha(\rho_{\text{mean}}) |\ln p|$, where
$d\alpha/d\rho_{\text{mean}}>0$, then this leads to the domain in
which $dZ/d\rho_{\text{mean}}=-\infty$, defining the
$\lambda$-transition to the condensate state and to the law
$$
\frac{dZ}{dP}=\frac{V_{\text{sp}}}{kT}
$$
for $P>P_\lambda$ and $V_{\text{sp}}>V_\lambda$ ($P_\lambda$ and
$V_\lambda$ depend on~$T$).

The equation for the dressed potential is of the form (we have
omitted the chemical potential for simplicity)
\begin{equation}
\label{eq68:v851} u(p,\rho)=\int_0^\infty
\frac{F((p-\eta)^2)\Theta((p-\eta)^2)\eta^2\,d\eta}
{e^{(p^2/2m+u(\eta,\rho))/kT}-1}
\cdot\biggl\{\int^\infty_{p_{\min}}
\frac{dp}{e^{p^2/mkT}-1}\biggr\}^{-4}-C,
\end{equation}
where \baselineskip 12.1pt
\begin{align}
\nonumber
 F(mE)&=\frac89\frac{\pi^2}{\rho^3}
\int^{\infty}_{r_0(1/E,\rho)}
\frac{\Phi(r)r^2\,dr}{\sqrt{1-\rho^2/r^2-\Phi(r)/E}}
\\&\qquad\qquad \times
\biggl[\int^{\infty}_{r_0}
\biggl\{\biggl(\sqrt{1-\frac{\rho^2}{r^2}-\frac{\Phi(r)}E}\biggr)^{-1}
-1\biggr\}\,dr\biggr]^{-1} \cdot\frac{1}{e^{E/kT}-1}\,,
\label{eq69:v851}
\end{align}
$\Theta((p-\eta)^2)$ is nonzero only in the domain $|p-p'|^2/m\ge
E_{\min}$,  $\rho^3=V_{\text{sp}}$, and $C=C(\mu,\rho)$.

Let us rewrite this equation in the form
\begin{equation}
\label{eq70:v851} u(p,\rho)=\frac12\int_{-\infty}^\infty
\frac{F((p-\eta)^2)\Theta((p-\eta)^2)\eta^2\,d\eta}
{e^{(\eta^2/2m+u(\eta,\rho))/kT}-1}\cdot
\biggl\{\int^\infty_{p_{\min}}
\frac{dp}{e^{p^2/mkT}-1}\biggr\}^{-4}-C
\end{equation}
and make the replacement $(p-\eta)^2=\xi^2$.

Then
\begin{align}
\nonumber u(p,\rho)&=\frac12\biggl\{
\int_{\sqrt{mE_{\min}}}^\infty \frac{F(\xi^2)\,d\xi}
{e^{(\eta^2/2m+u(\eta,\rho))/kT}-1}
+\int^{\sqrt{mE_{\min}}}_{-\infty} \frac{F(\xi^2)\,d\xi}
{e^{(\eta^2/2m+u(\eta,\rho))/kT}-1}\biggr\}
\\&\qquad\qquad \times\biggl\{\int^\infty_{p_{\min}}
\frac{dp}{e^{p^2/mkT}-1}\biggr\}^{-4}-C. \label{eq71:v851}
\end{align}
We search for conditions under which the solution of this
equation, as $p\to\infty$, is of the form
$$
u(p,\rho)=-\frac{p^2}{2m}+c(\rho),
$$
where $dc/d\rho>0$.

First, note that, by virtue of proofs and estimates similar to
those given in the theorems, we put the upper limit of the
integral over~$\eta$ equal to infinity, because $E$ in
\eqref{eq10:v851},~\eqref{eq12:v851},~\eqref{eq14:v851} (which is
different from $E=(p-p')^2/m$ in the scattering problem) is large,
while the integrand is rapidly decaying, and the difference
between the limit~$\sqrt{2mE}$ and~$\infty$ is less than the
given estimates.

But since we are concerned with the asymptotics of the solution
$u(p,\rho)$ as $p\to \infty$, it follows that, as $p'\to\infty$,
the integral over~$p'$ must sufficiently rapidly converge.
Therefore, this remark must be taken into account only for some
exotic family of solutions.

After the replacement indicated above, we express the
term~$\eta^2$ as $\eta^2=(\xi-p)^2$. But since the function is
symmetric, it follows that the integration of~$2p\xi$ over~$\xi$
yields zero. Thus, we find that the term on the right-hand side of
Eq.~\eqref{eq68:v851} is proportional to~$p^2$. Now it suffices
to equate the integral over~$\xi$ as $p\to\infty$ to~$-p^2/2m$.
Moreover, the choice of the constant~$c(\rho_{\text{mean}})$
remains arbitrary.

After the replacement, we obtain
$$
u(p,\rho)=-\frac{p^2}{2m}+w(p,\rho);
$$
here, as $p\to\infty$, we have $w(p,\rho)=c(\rho)|\ln p|$, and we
can write an equation for the function $w(p,\rho)> 0$,
$dc(\rho_{\text{mean}})/d\rho_{\text{mean}}>0$. Moreover, the
phase $\lambda$-transition, just as the minimal point of the
condensate, depends on the power of the repulsive term in the
Lennard-Jones potential as well on the quotients
$\gamma=\sigma/V_{\text{sp}}^{1/3}$ and $\alpha=U_0/kT$.
 The points~$\gamma_{\text{crit}}$
and the minimal point~$\alpha$, corresponding to
$dZ/dV_{\text{sp}}=-\infty$ are called \emph{$\lambda$-critical}.
As the pressure $w(p,\rho)$ increases above the point of the
$\lambda$-transition, $V_{\text{sp}}$ remains unchanged. This
implies that the volume~$V$ decreases as the pressure increases,
but, simultaneously, the number of particles outside the Bose
condensate also decreases. It is possible that all the particles
became dimers. and hence the total number of particles has
decreased. Further, they all became trimers, etc. The volume~$V$
has decreased, while the specific volume~$V_{\text{sp}}$ remained
constant---this is the law of the Bose condensate for classical
gases or, more precisely, is the law of cluster formation.

The energy of the   $\lambda$-point of  the logarithmical form
appears as $T=T_{\text{cr}}$. As one can see in Fig.~\ref{fig5}
and by virtue of
$$
\frac{dp}{dV_{\text{sp}}}|_{T=T_{\text{cr}}, p=p_{\text{cr}}} =
\frac{d^2p}{dV^2_{\text{sp}}}|_{T=T_{\text{cr}},
p=p_{\text{cr}}}=0
$$
the coefficient of incompressibility $\kappa= -(\pa\ln
V_{\text{sp}})/\pa p$  turns to infinity, and the compressibility
factor decreases steeply. This arouses a wave of  compressibility
(a shock wave), and therefore an additional term   $c|p|$, where
$c$ is the speed of sound, must occur in energy.

This term for $u(p, \rho)$ slows down the decrease of  $Z$, and
equation~ \eqref{eq66:v851} eliminates the shock wave as well as
this term.  At that instant for the derivative of heat capacity
the so-called $\lambda$-point occurs. This effect is shown more
obviously in Fig.~\ref{fig3}.

This phenomenon, as well as consequences of the
Pontryagin--An\-dro\-nov--Vitt theorem, does not follow from
classical mechanics but occurs when noise and fluctuations are
taken into account.  Therefore it does not follow from formulas
for the dressed potential, although the equations of ``collective
oscillations,'' as well as ``equations of variations,'' are
related to the dressed potential
\cite{Quasi-Particles_1,Quasi-Particles_2}.

To illustrate this we use both wave and quantum equations. The
wave equation of sound propagation has the form
$$
\frac {\pa^2 \Psi}{\pa t^2}= c^2\Delta\Psi,
$$
whereas the Schr{\"o}dinger equation is
\begin{equation}\label{ad3}
ih \frac{\pa \Psi}{\pa t} =\left\{- \frac{h^2}{2m}\Delta+
u(x)\right\}\Psi;
\end{equation}
or in the iterated form
\begin{equation}\label{ad4}
-h^2\frac{\pa^2\Psi}{\pa t^2} =\left\{ -\frac{h^2}{2m}\Delta+
u(x)\right\}^2\Psi.
\end{equation}

As follows from  formula (25) \cite{Quasi-Particles_2}
and~\cite{Quasi-Particles_3},
\begin{equation}\label{ad5}
h^2\frac{\pa^2\Psi}{\pa t^2} = h^2c^2\Delta\Psi-\left\{
-\frac{h^2}{2m}\Delta+ u(x)\right\}^2\Psi +\wh{O}(h^2)\Psi,
\end{equation}
where $\wh O(h^2)$ is an operator such that $\wh O e^{S(x, t)/h}
= O(h^2)$ for any $C^\infty$-smooth $S(x, t)$.

For $u(x) = 0$ and $\Psi = e^{(i/h)(px - Et)}$ this implies
\begin{equation}\label{ad6}
E^2= c^2p^2+\frac{p^4}{4m^2},
\end{equation}
which coincides with the spectrum obtained by N.N.~Bogolyubov for
the weakly nonideal classical gas~\cite{Bogol_47}.  This author
has established in 1995 that this spectrum has quasiclassical
rather than quantum nature
\cite{Quasi-Particles_1,Quasi-Particles_2} and, based on a
dependence from the capillary radius derived in~\cite{TMF_2005},
applied these results to the classical gas in
nanotubes~\cite{TMF_2007}.  These predictions are justified by
authoritative experimental data~\cite{Nature}.

The phase-space boxes~\eqref{ad2} are chosen to be invariant with
respect to the Hamiltonian system corresponding to the Hamiltonian
function $H(p, q)$.  In the context of quantum chaos
\cite{Quantization, Mishenko}, this corresponds to a kind of
generalized ergodicity.  Moreover, it follows from
\cite{Quasi-Particles_1,Quasi-Particles_2,Quasi-Particles_3,
Arxiv_Superfluidity} that the Hamiltonian
\begin{equation}
\label{Ham} \sqrt{c^2p^2 + H^2(p, q)}
\end{equation}
 corresponds to the Arnol'd diffusion in a
self-consistent field.

In this case, the equation for the dressed potential has the form
\begin{align}
\nonumber
u(p,\rho)+C(\mu,\rho)&=\frac43\mspace{1mu}\pi\mspace{1mu} \frac
{1}{V_{\text{sp}}}\int_0^{\infty}\biggl[\int^{\infty}_{r_0(E,\rho)}
\frac{\Phi(r)r^2\,dr}{\sqrt{1-\rho^2/r^2-\Phi(r)/E}}
\\*
\nonumber &\hphantom{{}=\frac89\mspace{1mu}\pi^2\mspace{1mu}
\frac NV\int_0^{\infty}\biggl[}\quad\quad
\times\biggl[\int^{\infty}_{r_0(E,\rho)}
\biggl\{\biggl(\sqrt{1-\frac{\rho^2}{r^2}
-\frac{\Phi(r)}E}\biggr)^{-1}-1\biggr\}\,dr\biggr]^{-1}
\\
\nonumber &\hphantom{{}=\frac89\mspace{1mu}\pi^2\mspace{1mu}
\frac NV\int_0^{\infty}\biggl[}\quad\quad
\times\frac{1}{\exp((\sqrt{cp^{'2}+p^{'4}/4m^2}+u(p',\rho))/kT)-1}
\\
\nonumber &\hphantom{{}=\frac89\mspace{1mu}\pi^2\mspace{1mu}
\frac NV\int_0^{\infty}\biggl[}\quad\quad \times
\frac{1}{\exp(((p-p')^2/m-\mu_1)/kT)-1}(p')^2\,dp'\biggr]
\\&\qquad\qquad \times\biggl\{\int^\infty_{p_{\min}}
\frac{dp}{\exp((p^2/m)/kT)-1}\biggr\}^{-3}
\\&\qquad\qquad \times\biggl(\int\frac{p^2\,dp}
{\exp((\sqrt{cp^{'2}+p^{'4}/4m^2}+u(p,\rho))/kT)-1}\biggr)^{-1},
\label{eq66:v851}
\end{align}
where $\mu$ and $\rho$ are given,
$C(\mu,\rho)$ is a constant depending on~$\mu$ and $\rho$,
$V_{\text{sp}}$ is the specific volume,
$r_0=r_0(\rho,E)$,
and $p_{\min}$ is determined by condition \eqref{eq65:v851}
for $\mu_1=\mu=0$.

After the change of variable $w(p, \rho) = u(p,
\rho_{\text{mean}}) - c|p|$ we find the values of $T$ and
$V_{\text{sp}}$ for which $w(p, \rho_{\text{mean}}) \approx
O(p^3)$ as $\mu(\rho_{\text{mean}})\to 0$.

It follows from the above theorems that as $N\to\infty$,
it is necessary to introduce a parameter $\kappa$
in the exponential in \eqref{eq65:v851}
in the left-hand side and a parameter $\kappa_1$
in the exponential in the right-hand side.
Since $p_{\min} \to 0$, we have $\kappa_1\gg\kappa$,
and hence the kernel of the integral operator tends to
the $\delta$-functoin as $\kappa \to 0$:
$\delta(p-p'+p_{\min})$.
To cancel the term $\frac{|p|}{c}$ in the right-hand side as $p\to0$,
it is necessary to satisfy the relation between $p_{\min}$ and
$\rho_{\text{mean}}$.

Then the leading term of the $T$ derivative of $TZ$, which
contains the heat capacity~$C_v$, will feature a logarithmic
dependence characteristic for a $\lambda$~point.  Indeed, as
$\mu(\rho_{\text{mean}}) \to 0$ we have
\begin{equation}
  \int \frac{O(p^3, \rho_{\text{mean}}) p^2\, dp}
  {(e^{O(p^3, \rho_{\text{mean}})} - 1)^2} \sim
  \int \frac{dp}p.
\end{equation}

Observe that although the quantum equations for the
self-consistent field go over into the classical ones as $h\to
0$, the equations of variations for the quantum mechanical
equations of the self-consistent field assume in the same limit
an extra term with respect to the classical equations of
variations (the equations of collective oscillations, see
\cite[1.1]{TMF_2007}).  This gives rise to the
Hamiltonian~\eqref{Ham}.

One can assume that at temperatures below the $\lambda$ point
ergodicity turns over into a KAM situation, making supefluidity
of a classical gas possible in a very thin nanotube capillary.
Thus the temperature of the $\lambda$ point can be regarded as
the crossover point between the generalized ergodicity and the
KAM dynamics.

\end{document}